\newcommand{\PPAD}{\ensuremath{\mathtt{PPAD}}\xspace}
\newcommand{\reals}{\mathbb{R}}
\newcommand{\eps}{\ensuremath{\epsilon}\xspace}
\newcommand{\supp}{\mathrm{supp}}
\newcommand{\suppmax}{\mathrm{suppmax}}
\newcommand{\mixedstrats}{\Delta}
\newcommand{\mixedstratslong}{\Delta_{m_1} \times \ldots \times \Delta_{m_n}}
\newcommand{\prof}{\ensuremath{\mathbf{x}}\xspace}
\newcommand{\profx}{\ensuremath{\mathbf{x}}\xspace}
\newcommand{\profy}{\ensuremath{\mathbf{y}}\xspace}
\newcommand{\profz}{\ensuremath{\mathbf{z}}\xspace}
\newcommand{\dual}{\ensuremath{\mathbf{q}}\xspace}
\newcommand{\degree}{\ensuremath{{d(i)}\xspace}}
\newcommand{\pmax}{\ensuremath{{U}\xspace}}
\newcommand{\pmin}{\ensuremath{{L}\xspace}}
\newcommand{\neighbours}{\ensuremath{N(i)}\xspace}
\newcommand{\brp}[2]{\ensuremath{u_#2^*(#1)}\xspace}
\newcommand{\brs}{\ensuremath{{\mathrm{Br}_{i}(\prof)}\xspace}}
\newcommand{\brst}{\ensuremath{{\mathrm{Br^{\delta}_{i}}(\prof^*)}\xspace}}
\newcommand{\brd}{\ensuremath{{\mathrm{Br^{\delta}_{i}}(\prof)}\xspace}}
\newcommand{\pays}[2]{\ensuremath{v_#2(#1)}\xspace}
\newcommand{\pay}[2]{\ensuremath{u_#2(#1)}\xspace}
\newcommand{\payt}[3]{\ensuremath{u_#3(#1,#2)}\xspace}
\newcommand{\maxed}[3]{\ensuremath{M_#3(#1,#2)}\xspace}
\begin{document}

\title{Computing Approximate Nash Equilibria in Polymatrix Games
\thanks{
The first 
author is supported by the Microsoft Research PhD sponsorship program.
The second and third authors are supported by EPSRC grant EP/L011018/1, and
the third author is also supported by ESRC grant ESRC/BSB/09.
The work of the fourth author is supported partially by the EU ERC Project ALGAME and by the Greek THALIS action ``Algorithmic Game Theory''.
}
}
\author{
Argyrios Deligkas
\and
John Fearnley
\and 
Rahul Savani
\and 
Paul Spirakis
}
\institute{A. Deligkas \and J. Fearnley \and R. Savani \and P. Spirakis \at Department of Computer Science, University of Liverpool, UK \and
P. Spirakis \at
Research Academic Computer Technology Institute (CTI), Greece}

\maketitle

\begin{abstract}
In an $\epsilon$-Nash equilibrium, a player can gain at most $\epsilon$ by
unilaterally changing his behaviour. For two-player (bimatrix) games with
payoffs in $[0,1]$, the best-known~$\epsilon$ achievable in polynomial time is
0.3393~\cite{TS}. In general, for $n$-player games an $\epsilon$-Nash
equilibrium can be computed in polynomial time for an $\epsilon$ that is an
increasing function of $n$ but does not depend on the number of strategies of
the players. For three-player and four-player games the corresponding values of
$\eps$ are 0.6022 and 0.7153, respectively. Polymatrix games are a restriction
of general $n$-player games where a
player's payoff is the sum of payoffs from a number of bimatrix games. There
exists a very small but constant $\eps$ such that computing an $\eps$-Nash
equilibrium of a polymatrix game is \PPAD-hard. Our main result is that a
$(0.5+\delta)$-Nash equilibrium of an $n$-player polymatrix game can be computed
in time polynomial in the input size and $\frac{1}{\delta}$. Inspired by the
algorithm of Tsaknakis and Spirakis~\cite{TS}, our algorithm uses gradient
descent on the maximum regret of the players. We also show that this algorithm
can be applied to efficiently find a $(0.5+\delta)$-Nash equilibrium in a
two-player Bayesian game.

\keywords{Approximate Nash equilibria, gradient descent, polymatrix games,
Bayesian games.}
\end{abstract}

\section{Introduction}

\paragraph{\bf Approximate Nash equilibria.}
Nash equilibria are the central solution concept in game theory. 
%
Since it is known that computing an \emph{exact} Nash equilibrium~\cite{DGP,CDT}
is unlikely to be achievable in polynomial time, a line of work has arisen that
studies the computational aspects of approximate Nash equilibria.
%
The most widely studied notion is of an \emph{$\epsilon$-approximate Nash
equilibrium} ($\epsilon$-Nash), which requires that all players have an expected
payoff that is within~$\epsilon$ of a best response. 
This is an \emph{additive} notion of approximate equilibrium; the problem of
computing approximate equilibria of bimatrix games using a relative notion of
approximation is known to be 
\PPAD-hard even for constant approximations~\cite{Das13}.

So far, $\epsilon$-Nash equilibria have mainly been studied in the context of
two-player \emph{bimatrix} games.
A line of work~\cite{DMP,Progress,BBM10} has investigated the best $\epsilon$
that can be guaranteed in polynomial time for bimatrix games. The current best
result, due to Tsaknakis and Spirakis~\cite{TS}, is a polynomial-time algorithm that
finds a 0.3393-Nash equilibrium of a bimatrix game with all payoffs in $[0,1]$.

In this paper, we study $\epsilon$-Nash equilibria in the context of
\emph{many-player} games, a topic that has received much less attention. A
simple approximation algorithm for many-player games can be obtained by
generalising the algorithm of Daskalakis,  Mehta and Papadimitriou~\cite{DMP}
from the two-player setting to the $n$-player setting, which provides a
guarantee of~$\eps = 1-\frac{1}{n}$.
This has since been improved independently by three sets of
authors~\cite{BGR08,HRS08,BBM10}. They provide a method that converts a
polynomial-time algorithm that for finding $\epsilon$-Nash equilibria in
$(n-1)$-player games into an algorithm that finds a $\frac{1}{2-\epsilon}$-Nash
equilibrium in $n$-player games. Using the polynomial-time $0.3393$ algorithm of
Tsaknakis and Spirakis~\cite{TS} for $2$-player games as the base case for this
recursion, this allows us to provide polynomial-time algorithms with
approximation guarantees of $0.6022$ in $3$-player games, and $0.7153$ in
$4$-player games. These guarantees tend to $1$ as
$n$ increases, and so far, no constant $\epsilon<1$ is known such that, for
all~$n$, an $\epsilon$-Nash equilibrium of an $n$-player game can be computed in
polynomial time.

For $n$-player games, we have lower bounds for $\epsilon$-Nash equilibria. More
precisely, Rubinstein has shown that when $n$ is not a constant there exists a
constant but very small~$\eps$ such that it is \PPAD-hard to compute an
$\eps$-Nash equilibrium~\cite{Rub14b}. This is quite different from the
bimatrix game setting, where the existence of a quasi-polynomial time
approximation scheme rules out such a lower bound, unless all of \PPAD can be
solved in quasi-polynomial time~\cite{LMM03}.

\paragraph{\bf Polymatrix games. } 
In this paper, we focus on a particular class of many-player games called
\emph{polymatrix games.} In a polymatrix game, the interaction between the
players is specified by an $n$ vertex graph, where each vertex represents one of
the players. Each edge of the graph specifies a bimatrix game that will be
played by the two respective players, and thus a player with degree $d$ will
play $d$ bimatrix games simultaneously. More precisely, each player picks a
strategy, and then plays this strategy in \emph{all} of the bimatrix games that
he is involved in. His payoff is then the sum of the payoffs that he obtains in
each of the games.

%
Polymatrix games are a class of \emph{succinctly represented} $n$-player games:
a polymatrix game is specified by at most $n^2$ bimatrix games, each of which
can be written down in quadratic space with respect to the number of strategies.
This is unlike general $n$-player strategic form games, which require a
representation that is exponential in the number of players.

There has been relatively little work on approximation algorithms for
polymatrix games. The approximation algorithms for general games can 
be applied in this setting in an obvious way, but to the best of our knowledge there have been no
upper bounds that are specific to polymatrix games. On the other hand, the
lower bound of Rubinstein mentioned above is actually proved by constructing
polymatrix games. Thus, there is a constant but very small~$\eps$ such that it
is \PPAD-hard to compute an $\eps$-Nash equilibrium~\cite{Rub14b}, and this
again indicates that approximating polymatrix games is quite different to
approximating bimatrix games.

\paragraph{\bf Our contribution.}

Our main result is an algorithm that, for every $\delta$ in the range $0 <
\delta \leq 0.5$, finds a $(0.5 + \delta)$-Nash equilibrium of a polymatrix game
in 
time polynomial in the input size and
$\frac{1}{\delta}$.
Note that our approximation guarantee \emph{does not depend
on the number of players,} which is a property that was not previously known to
be achievable for polymatrix games, and still cannot be achieved for general
strategic form games.

We prove this result by adapting the algorithm of Tsaknakis and
Spirakis~\cite{TS} (henceforth referred to as the TS algorithm). They give a gradient descent algorithm for finding a
$0.3393$-Nash equilibrium in a bimatrix game. We generalise their gradient
descent techniques to the polymatrix setting, and show that it always arrives
at a $(0.5 + \delta)$-Nash equilibrium after a polynomial number of iterations.



In order to generalise the TS algorithm, we had to overcome several issues.
Firstly, the TS algorithm makes the regrets of the two players equal in every
iteration, but there is no obvious way to achieve this in the polymatrix setting.
Instead, we show how gradient descent can be applied to a strategy profile
where the regrets are not necessarily equal. Secondly, the output of the TS
algorithm is either a point found by gradient descent, or a point obtained by
modifying the result of gradient descent. In the polymatrix game setting, it is
not immediately obvious how such a modification can be derived with a
non-constant number of players (without an exponential blowup). Thus we apply a
different analysis, which proves that the point resulting from gradient descent
always has our approximation guarantee. It is an interesting open
question whether a better approximation guarantee can be achieved when there is
a constant number of players.

An interesting feature of our algorithm is that it can be applied even when
players have differing degrees. Originally, polymatrix games were defined only
for complete graphs~\cite{H72}. Since previous work has only considered lower
bounds for polymatrix games, it has been sufficient to restrict attention to 
regular graphs, as in work Rubinstein~\cite{Rub14b}.
However, since this paper is proving an
upper bound, we must be more careful. As it turns out, our algorithm will
efficiently find a
$(0.5+\delta)$-Nash equilibrium for all $\delta>0$, no matter what
graph structure the polymatrix game has.

Finally, we show that our algorithm can be applied to two-player Bayesian games.
In a two-player Bayesian game, each player is assigned a type according to a
publicly known probability distribution. Each player knows their own type, but
does not know the type of their opponent. We show that finding an
$\epsilon$-Nash equilibrium in these games can be reduced to the problem of
finding an $\epsilon$-Nash equilibrium in a polymatrix game, and therefore, our
algorithm can be used to efficiently find a $(0.5+\delta)$-Nash equilibrium of
a two-player Bayesian game.

\paragraph{\bf Related work.}

An FPTAS for the problem of computing an $\epsilon$-Nash equilibrium of a
bimatrix game does not exist unless every problem in \PPAD can be solved in
polynomial time~\cite{CDT}. Arguably, the biggest open question in equilibrium
computation is whether there exists a PTAS for this problem.
As we have mentioned, for any constant $\eps>0$, there does exist a
\emph{quasi-polynomial}-time algorithm for computing an $\eps$-Nash equilibria
of a bimatrix game, or any game with a constant number of
players~\cite{LMM03,BBP14}, with
running time $k^{O(\log k)}$ for a $k \times k$ bimatrix game.
Consequently, in contrast to the many-player case, it is not believed that there
exists a constant $\eps$ such that the problem of computing an $\epsilon$-Nash
equilibrium of a bimatrix game (or any game with a constant number of players)
is \PPAD-hard, since it seems unlikely that all problems in \PPAD have
quasi-polynomial-time algorithms.
On the other hand, for multi-player games, as mentioned above, there is a small
constant $\eps$ such that it is \PPAD-hard to compute an \eps-Nash equilibrium
of an $n$-player game when $n$ is not constant.
One positive result we do have for multi-player games is that there is a PTAS 
for \emph{anonymous games} (where the identity of players does not matter)
when the number of strategies is constant~\cite{DP14}.

Polymatrix games have played a central role in the reductions that have 
been used to show \PPAD-hardness of games and other equilibrium 
problems~\cite{DGP,CDT,EY10,FT-C10,CPY13}.
Computing an \emph{exact} Nash equilibrium in a polymatrix game is
\PPAD-hard even when all the bimatrix games played are either zero-sum
games or coordination games~\cite{CD11}.
%
%
%
%
Polymatrix games have been used in other contexts too.
For example, Govindan and Wilson proposed a (non-polynomial-time) algorithm for computing
Nash equilibria of an $n$-player game, by approximating the game with a
sequence of polymatrix games~\cite{GW04}.  Later, they presented a
(non-polynomial) reduction that reduces $n$-player games to polymatrix games
while preserving approximate Nash equilibria \cite{GW10}.  Their reduction
introduces a central coordinator player, who interacts bilaterally with every
player.


\section{Preliminaries}
\label{sec:prelim}

We start by fixing some notation. We use $[k]$ to denote the set of integers
$\{1, 2, \ldots, k\}$, and when a universe $[k]$ is clear, we will use $\bar{S}
= \{ i \in [k], i \notin S\}$ to denote the complement of $S \subseteq [k]$.
For a $k$-dimensional vector $x$, we use $x_{-S}$ to denote the elements of $x$
with with indices~$\bar{S}$, and in the case where $S = \{i\}$ has only one
element, we simply write $x_{-i}$ for $x_{-S}$.

\paragraph{\bf Polymatrix games.} An $n$-player polymatrix game is defined by an
undirected graph $(V, E)$ with $n$ vertices, where every vertex corresponds to a
player. The edges of the graph specify which players interact with each other.
For each $i \in [n]$, we use $\neighbours = \{j \; : \; (i,j) \in E\}$ to
denote the neighbours of player $i$.
%

Each edge $(i, j) \in E$ specifies that a bimatrix game will be played between
players $i$ and $j$. Each player $i \in [n]$ has a fixed number of pure
strategies $m_i$, and the bimatrix game on edge $(i, j) \in E$ will therefore be
specified by an $m_i \times m_j$ matrix $A_{ij}$, which gives the payoffs for
player $i$, and an $m_j \times m_i$ matrix $A_{ji}$, which gives the payoffs for
player $j$. We allow the individual payoffs in each matrix to be an arbitrary
(even negative) rational number. As we describe in the next subsection, we will
rescale these payoffs so that the overall payoff to each player lies in the
range $[0, 1]$. 

\subsection{Payoff Normalization} 


Before we continue, we must first discuss how the payoffs in the game are
rescaled. It is common, when proving results about additive notions of
approximate equilibria, to rescale the payoffs of the game. This is necessary in
order for different results to be comparable. For example, all results about
additive approximate equilibria in bimatrix games assume that the payoff
matrices have entries in the range $[0, 1]$, and therefore an $\epsilon$-Nash
equilibrium always has a consistent meaning. For the same reason, we must
rescale the payoffs in a polymatrix in order to give a consistent meaning to
an $\epsilon$-approximation.


An initial, naive, approach would be to specify that each of the individual
bimatrix games has entries in the range $[0, 1]$. This would be sufficient if we
were only interested in polymatrix games played on either complete graphs or
regular graphs. However, in this model, if the players have differing degrees,
then they also have differing maximum payoffs. This means that an additive
approximate equilibrium must pay more attention to high degree players, as they
can have larger regrets.

One solution to this problem, which was adopted in the conference version of
this paper~\cite{DFSS14}, is to rescale according to the degree. That is, given
a polymatrix game where each bimatrix game has payoffs in the range $[0, 1]$, if
a player has degree $d$, then each of his payoff matrices is divided by $d$.
This transformation ensures that every player has regret in the range $[0, 1]$,
and therefore low degree players are not unfairly treated by additive
approximations.

However, rescaling according to the degree assumes that each bimatrix game
actually uses the full range of payoffs between $[0, 1]$. In particular, some
bimatrix games may have minimum payoff strictly greater than $0$, or maximum
payoff strictly less than $1$. This issue arises, in particular, in our
application of two-player Bayesian games. Note that, unlike the case of
a single bimatrix game,
we cannot fix this by rescaling individual bimatrix games in a polymatrix game,
because we must maintain the relationship between the payoffs in all of the
bimatrix games that a player is involved in.


To address this, we will rescale the games so that, for each player, the minimum
possible payoff is $0$, and the maximum possible payoff is $1$. For each
player~$i$, we denote by $\pmax$ the maximum payoff he can obtain, and by $\pmin$ the
minimum payoff he can obtain. Formally:
\begin{align*}
\pmax_i & := \max_{p \in [m_i]} \left( \sum_{j \in \neighbours} \max_{q \in [m_j]} \big( A_{ij}(p,q) \big) \right),\\
\pmin_i & := \min_{p \in [m_i]} \left( \sum_{j \in \neighbours} \min_{q \in [m_j]} \big( A_{ij}(p,q) \big) \right).
\end{align*}
Then, for all $i$ and all $j \in \neighbours$ we will apply the following
transformation, which we call $T(\cdot)$, to all the entries $z$ of payoff
matrices $A_{ij}$:
\begin{align*}
T_i(z) = \frac{1}{\pmax_i - \pmin_i} \cdot \left( z - \frac{\pmin_i}{\degree} \right).
\end{align*}
%
Observe that, since player $i$'s payoff is the sum of $\degree$ many bimatrix
games, it must be the case that after transforming the payoff matrices in this
way, player $i$'s maximum possible payoff is $1$, and player $i$'s minimum
possible payoff is $0$. For the rest of this paper, we will assume that the
payoff matrices given by $A_{ij}$ are rescaled in this way.



\subsection{Approximate Nash Equilibria}

\paragraph{\bf Strategies.} 

A \emph{mixed strategy} for player $i$ is a
probability distribution over player $i$'s pure strategies. Formally, for each
positive integer $k$, we denote the $(k-1)$-dimensional simplex by $\Delta_k :=
\{x: x \in \reals^k, x \geq 0, \sum_{i=1}^k x_i = 1\}$, and therefore the set of
strategies for player $i$ is $\Delta_{m_i}$.
For each mixed strategy $x \in \Delta_m$, the \emph{support} of $x$ is defined
as $\supp(x) := \{i \in [m]: x_i \neq 0 \}$, which is the set of strategies
played with positive probability by $x$. 

A \emph{strategy profile} specifies a mixed strategy for every player. We denote
the set of mixed strategy profiles as $\mixedstrats := \mixedstratslong$. Given
a strategy profile $\mathbf{x} = (x_1,\ldots,x_n) \in \mixedstrats$, the payoff
of player $i$ under $\mathbf{x}$ is the sum of the payoffs that he obtains in
each of the bimatrix games that he plays. Formally, we define: 
\begin{align}
\label{payoff}
\pay{\prof}{i} := x_i^T\sum\limits_{j \in \neighbours}A_{ij}x_j.
\end{align}
We denote by $\payt{x'_i}{\prof}{i}$ the payoff for player $i$ when he plays
$x'_i$ and 
the other players play according to the strategy profile $\prof$. 
In some cases the first argument will be $x_i - x'_i$ which may not correspond to a valid strategy for player 
$i$ but we still apply the equation as follows:
\begin{equation*}
\payt{x_i-x'_i}{\prof}{i} := 
x_i^T\sum\limits_{j \in \neighbours}A_{ij}x_j - x_i'^T\sum\limits_{j \in \neighbours}A_{ij}x_j = 
\payt{x_i}{\prof}{i} - \payt{x'_i}{\prof}{i}.
\end{equation*}

\paragraph{\bf Best responses.}
Let $\pays{\prof}{i}$ be the vector of payoffs for each pure strategy of player
$i$ when the rest of players play strategy profile $\prof$. Formally,
\begin{equation*}
\pays{\prof}{i} = \sum_{j \in \neighbours} A_{ij}x_j.
\end{equation*}
For each vector $x \in R^m$, we define $\suppmax(x)$ to be the set of indices
that achieve the maximum of $x$, that is, we define $\suppmax(x) = \{ i \in [m]:
x_i \geq x_j, \forall j\in[m] \}$. Then the \emph{pure best responses} of player
$i$ against a strategy profile \prof (where only $\prof_{-i}$ is relevant) is
given by:
\begin{equation}
\label{eq:brs}
\brs = \suppmax\left(\sum_{j \in \neighbours} A_{ij}x_j\right ) = \suppmax(\pays{\prof}{i}).
\end{equation}
The corresponding \emph{best response payoff} is given by:
\begin{equation}
\label{eq:brp}
\brp{\prof}{i} = \max_k \left \{ \bigl (\sum_{j \in \neighbours} A_{ij}x_j\bigr )_k \right \} = \max_k \left \{ \bigl ( \pays{\prof}{i} \bigr )_k \right\}.
\end{equation}

\paragraph{\bf Equilibria.}



%
%
%

In order to define the exact and approximate equilibria of a polymatrix game,
we first define the \emph{regret} that is suffered by each player under a given
strategy profile. The regret function  $f_i:\mixedstrats \rightarrow [0, 1]$ is 
defined, for each player $i$, as follows: 
\begin{align} 
\label{eq:regret} 
f_i(\prof) := \brp{\prof}{i} - \pay{\prof}{i}.
\end{align}
The maximum regret under a strategy profile $\prof$ is given by the
function $f(\prof)$ where:
\begin{align}
\label{eq:f}
f(\prof) := \max\{f_1(\prof), \ldots, f_n(\prof)\}.
\end{align}
We say that $\prof$ is an $\epsilon$-approximate Nash equilibrium
($\epsilon$-NE) if we have:
\begin{equation*}
f(\prof) \le \epsilon,
\end{equation*}
and $\prof$ is an \emph{exact} Nash equilibrium if we have
$f(\prof) = 0$.






\section{The gradient}

Our goal is to apply gradient descent to the regret function $f$. In this
section, we formally define the gradient of $f$ in
Definition~\ref{def:gradient}, and give a reformulation of that definition in
Lemma~\ref{gradient-lem}. In order to show that our gradient descent method
terminates after a polynomial number of iterations, we actually need to use a
slightly modified version of this reformulation, which we describe at the end of
this section in Definition~\ref{def:dgrad}.

Given a point $\prof \in
\mixedstrats$, a \emph{feasible direction} from $\prof$ is defined by any other
point $\prof' \in \mixedstrats$. This defines a line between $\prof$ and
$\prof'$, and formally speaking, the direction of this line is $\prof' - \prof$.
%
%
In order to define the gradient of this direction, we consider the function
$f((1-\eps)\cdot \prof + \eps \cdot \prof') - f(\prof)$ where $\epsilon$ lies in
the range $0 \leq \epsilon \leq 1$. The gradient of this direction is given in
the following definition.
\begin{definition}
\label{def:gradient}
Given profiles $\prof,\prof' \in \mixedstrats$ and $\epsilon \in [0,1]$, we
define:
\begin{align*}
Df(\prof,\prof', \epsilon) & := f((1 - \eps) \cdot \prof + \epsilon \cdot
\prof') - f(\prof).
\end{align*}
Then, we define the gradient of $f$ at $\prof$ in the direction $\prof'-\prof$ as:
\begin{equation*}
Df(\prof,\prof') = \lim \limits_{\epsilon \rightarrow 0}\frac{1}{\epsilon}Df(\prof,\prof', \epsilon).
\end{equation*}
\end{definition}
This is the natural definition of the gradient, but it cannot be used directly
in a gradient descent algorithm. 
We now show how this definition can be reformulated. Firstly, for each $\prof,
\prof' \in \mixedstrats$, and for each player $i \in [n]$, we define:
\begin{equation}
\label{eq:Df_i}
Df_i(\prof, \prof') := \max_{k \in \brs} \left \{ \bigl ( \pays{\prof'}{i}
\bigr )_k \right \} - \payt{x_i}{\prof'}{i} + \payt{x_i-x'_i}{\prof}{i} .
\end{equation}
Next we define $\mathcal{K}(\prof)$ to be the set of players that have maximum
regret under the strategy profile~\prof.
\begin{definition}
Given a strategy profile \prof, define $\mathcal{K}(\prof)$ as follows:
\begin{equation}
\label{eq:k-set}
\mathcal{K}(\prof) := \bigl\{ i \in [n], f_i(\prof) = f(\prof) \bigr\} = 
\left\{ i \in [n], f_i(\prof) = \max_{j \in [n]} f_j(\prof)
\right\}.
\end{equation}
\end{definition}

The following lemma, which is proved in Appendix~\ref{app:gradient-lem},
provides our reformulation. 
%
%
\begin{lemma}
\label{gradient-lem}
The gradient of $f$ at point $\prof$ along direction $\prof' - \prof$ is: 
\begin{equation*}
Df(\prof, \prof') = \max_{i \in \mathcal{K}(\prof)} Df_i (\prof, \prof') - f(\prof).
\end{equation*}
\end{lemma}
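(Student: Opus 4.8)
The plan is to compute the one-sided derivative of $f$ along the segment $[\prof,\prof']$ one player at a time, and then combine the pieces. For $\epsilon\in[0,1]$ write $\mathbf{z}(\epsilon):=(1-\epsilon)\prof+\epsilon\prof'$, so that $\mathbf{z}(0)=\prof$, each payoff vector interpolates linearly, $\pays{\mathbf{z}(\epsilon)}{i}=(1-\epsilon)\pays{\prof}{i}+\epsilon\pays{\prof'}{i}$, and player~$i$ plays $(1-\epsilon)x_i+\epsilon x'_i$ in $\mathbf{z}(\epsilon)$. The first ingredient is a ``gap'' claim: for each player~$i$ there is a threshold $\epsilon_i>0$ such that for all $\epsilon\in[0,\epsilon_i]$,
\begin{equation*}
\brp{\mathbf{z}(\epsilon)}{i}=(1-\epsilon)\,\brp{\prof}{i}+\epsilon\,\max_{k\in\brs}\bigl(\pays{\prof'}{i}\bigr)_k .
\end{equation*}
Indeed $\brp{\mathbf{z}(\epsilon)}{i}=\max_k\bigl[(1-\epsilon)(\pays{\prof}{i})_k+\epsilon(\pays{\prof'}{i})_k\bigr]$, and every pure strategy $k\notin\brs$ has $(\pays{\prof}{i})_k$ strictly below $\brp{\prof}{i}$; taking $\epsilon_i$ small enough (relative to this gap and to the spread of $\pays{\prof'}{i}$) forces any maximiser to lie in $\brs$, within which $(\pays{\prof'}{i})_k$ is the tie-breaker. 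In particular $\brp{\mathbf{z}(\epsilon)}{i}$ is affine in $\epsilon$ near $0$, with right derivative $\max_{k\in\brs}(\pays{\prof'}{i})_k-\brp{\prof}{i}$.

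Next I would differentiate the payoff term. The quantity $\pay{\mathbf{z}(\epsilon)}{i}=\bigl((1-\epsilon)x_i+\epsilon x'_i\bigr)^{T}\bigl((1-\epsilon)\pays{\prof}{i}+\epsilon\pays{\prof'}{i}\bigr)$ is a quadratic in $\epsilon$; expanding and differentiating at $\epsilon=0$ --- being careful with the $(1-\epsilon)^2$ cross term --- gives $-2\,\pay{\prof}{i}+\payt{x_i}{\prof'}{i}+\payt{x'_i}{\prof}{i}$. Subtracting this from the right derivative of $\brp{\mathbf{z}(\epsilon)}{i}$ obtained above, and using $f_i(\prof)=\brp{\prof}{i}-\pay{\prof}{i}$ together with $\payt{x_i-x'_i}{\prof}{i}=\pay{\prof}{i}-\payt{x'_i}{\prof}{i}$, the terms rearrange exactly into $Df_i(\prof,\prof')-f_i(\prof)$. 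Hence each $f_i$ admits, along the segment, the first-order expansion $f_i(\mathbf{z}(\epsilon))=f_i(\prof)+\epsilon\bigl(Df_i(\prof,\prof')-f_i(\prof)\bigr)+o(\epsilon)$.

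Finally I would assemble the players. Since every $f_j$ is continuous and $f_i(\prof)<f(\prof)$ strictly for each $i\notin\mathcal{K}(\prof)$, there is $\epsilon_0>0$ with $f(\mathbf{z}(\epsilon))=\max_{i\in\mathcal{K}(\prof)}f_i(\mathbf{z}(\epsilon))$ for all $\epsilon\in[0,\epsilon_0]$. For $i\in\mathcal{K}(\prof)$ we have $f_i(\prof)=f(\prof)$, so by the previous step $f_i(\mathbf{z}(\epsilon))=f(\prof)+\epsilon\bigl(Df_i(\prof,\prof')-f(\prof)\bigr)+o(\epsilon)$. Taking the maximum over the finite set $\mathcal{K}(\prof)$, and noting that a finite maximum of $o(\epsilon)$ error terms is again $o(\epsilon)$, gives $f(\mathbf{z}(\epsilon))-f(\prof)=\epsilon\bigl(\max_{i\in\mathcal{K}(\prof)}Df_i(\prof,\prof')-f(\prof)\bigr)+o(\epsilon)$; dividing by $\epsilon$ and letting $\epsilon\to0$ yields the claimed identity $Df(\prof,\prof')=\max_{i\in\mathcal{K}(\prof)}Df_i(\prof,\prof')-f(\prof)$.

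The step I expect to carry the real content is the pair of ``small enough $\epsilon$'' arguments --- the gap claim and the reduction to $\mathcal{K}(\prof)$ --- since these are precisely what let us linearise a maximum of finitely many piecewise-smooth functions; the quadratic expansion in the second paragraph is mechanical, although the $(1-\epsilon)^2$ term is an easy place to lose a factor of~$2$. It is also worth recording that $Df(\prof,\prof')$ is inherently a one-sided limit ($\mathbf{z}(\epsilon)$ only makes sense for $\epsilon\ge0$), which is consistent with the one-sided derivatives used above.
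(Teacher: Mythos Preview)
Your proof is correct and follows the same underlying decomposition as the paper: handle the best-response $\max$ by a gap argument (only indices in $\brs$ matter for small $\epsilon$), expand the bilinear payoff $\pay{\mathbf{z}(\epsilon)}{i}$ as a quadratic, and handle the outer $\max$ over players by a second gap argument (only $i\in\mathcal{K}(\prof)$ matter). The paper executes the same three steps but packages them as exact finite-$\epsilon$ identities rather than $o(\epsilon)$ expansions: it introduces an explicit remainder $\Lambda_i(\prof,\prof',\epsilon)$ capturing the contribution of non-best-responses, derives an exact formula for $f_i(\bar\prof)-f(\prof)$ (their Lemma~\ref{lem:help-2}), and only then lets $\epsilon\to 0$, where $\Lambda_i\to 0$ and the $(1-\epsilon)\max_j\{f_j-f_i\}/\epsilon$ term is $0$ for $i\in\mathcal{K}(\prof)$ and $-\infty$ otherwise. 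Your asymptotic treatment is cleaner for this lemma in isolation; the paper's exact-identity approach pays off later, since the same template (with $\Lambda_i$ replaced by $\Lambda_i^\delta$ and $\brs$ by $\brd$) is what drives the finite-step convergence bound in Appendix~\ref{app:gain-bound}.
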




In order to show that our gradient descent algorithm terminates
after a polynomial number of steps, we have to use a slight modification of 
the formula given in Lemma~\ref{gradient-lem}. More precisely, in the definition
of $Df_i(\prof, \prof')$, we need to take the maximum over the $\delta$-best
responses, rather than the best responses. 

We begin by providing the definition of the $\delta$-best responses.
\begin{definition}[$\delta$-best response]
\label{def:dbr}
Let $\prof \in \Delta$, and let $\delta \in (0, 0.5]$.
The $\delta$-best response set $\brd$ for player $i \in [n]$ is defined as:
\begin{equation*} 
\brd := \left\{ j \in [m_i] : \bigl( \pays{\prof}{i} \bigr)_j \geq \brp{\prof}{i} - \delta \right\}.
\end{equation*}
\end{definition}
We now define the function
$Df^{\delta}_i(\prof,\prof')$.

\begin{definition}
\label{def:dgrad}
Let $\prof, \prof' \in \Delta$, let $\eps \in [0, 1]$, and let $\delta \in (0,
0.5]$. We define $Df^{\delta}_i(\prof,\prof')$ as:
\begin{equation}
\label{eq:delta-dfi}
Df^{\delta}_i(\prof,\prof') :=  \max_{k \in \brd} \left \{ \bigl ( \pays{\prof'}{i}
\bigr )_k \right \} - \payt{x_i}{\prof'}{i} - \payt{x'_i}{\prof}{i} + \payt{x_i}{\prof}{i}.
\end{equation}
Furthermore, we define $Df^{\delta}(\prof, \prof')$ as:
\begin{equation}
\label{eq:dgrad}
Df^{\delta}(\prof, \prof') = \max_{i \in \mathcal{K}(\prof)} Df^{\delta}_i (\prof, \prof') - f(\prof).
\end{equation}
\end{definition}

Our algorithm works by performing gradient descent using the function
$Df^{\delta}$ as the gradient. Obviously, this is a different function to $Df$,
and so we are not actually performing gradient descent on the gradient of $f$.
It is important to note that all of our proofs are in terms of $Df^{\delta}$,
and so this does not affect the correctness of our algorithm. We proved
Lemma~\ref{gradient-lem} in order to explain where our definition of the gradient
comes from, but the correctness of our algorithm does not depend on the
correctness of Lemma~\ref{gradient-lem}.

%
%

\section{The algorithm}
\label{sec:stationary}

In this section, we describe our algorithm for finding a $(0.5+\delta)$-Nash equilibrium
in a polymatrix game by gradient descent. In each iteration of the algorithm, 
 we must find the \emph{direction} of steepest
descent with respect to $Df^{\delta}$. 
We show that this task can be achieved by solving a linear
program, and we then use this LP to formally specify our algorithm.

\paragraph{\bf The direction of steepest descent.}

We show that the direction of steepest descent can be found by solving a linear
program.
%
%
%
%
Our goal is, for a given strategy profile $\prof$, to find another strategy
profile $\prof'$ so as to minimize the gradient $Df^{\delta}(\prof, \prof')$. 
Recall that $Df^{\delta}$ is defined in Equation~\eqref{eq:dgrad} to be:
\begin{equation*}
Df^{\delta}(\prof, \prof') = \max_{i \in \mathcal{K}(\prof)} Df^{\delta}_i (\prof, \prof') - f(\prof).
\end{equation*}
Note that the term $f(\prof)$ is a constant in this expression, because it is
the same for all directions $\prof'$. Thus, it is sufficient to formulate a
linear program in order to find the $\prof'$ that minimizes 
$\max_{i \in \mathcal{K}(\prof)} Df^{\delta}_i
(\prof, \prof')$.  Using the definition of $Df^{\delta}_i$ in
Equation~\eqref{eq:delta-dfi}, we can do this as follows.



%
%
\begin{definition}[Steepest descent linear program]
\label{def:lp}
Given a strategy profile $\prof$, the \emph{steepest descent linear program} is
defined as follows.
Find $\prof' \in \mixedstrats$, $l_1,l_2,\ldots,l_{|\mathcal{K}(\prof)|}$, and $w$ such that:
\begin{align*}
\text{minimize} \qquad & w \\
\text{subject to} \quad & \bigl( \pays{\prof'}{i} \bigr)_k \leq l_i  \hspace*{25mm} \forall k \in \brd, &\forall i \in \mathcal{K}(\prof) \\
&  l_i - \payt{x_i}{\prof'}{i} - \payt{x'_i}{\prof}{i} +  \pay{\prof}{i}  \leq w  & \forall i \in \mathcal{K}(\prof)\\
&  \prof'  \in \Delta.
\end{align*}
\end{definition}
The $l_i$ variables deal with the maximum in the term $\max_{k \in
\brd} \bigl\{ \bigl( \pays{\prof'}{i} \bigr)_k \bigr\}$, while the
variable $w$ is used to deal with the maximum over the functions
$Df^{\delta}_i$. Since the constraints of the linear program
correspond precisely to the
definition of $Df^{\delta}$, it is clear that, when we minimize $w$, 
the resulting $\prof'$ specifies the direction of steepest descent. For
each profile $\prof$, we define $Q(\prof)$ to be the direction $\prof'$ found by
the steepest descent LP for $\prof$.

Once we have found the direction of steepest descent, we then need to move in
that direction. More precisely, we fix a parameter $\eps = \frac{\delta}{\delta
+ 2}$ which is used to determine how far we move in the steepest descent
direction. We will show in Section~\ref{sec:converge} that this value of
$\epsilon$ leads to a polynomial bound on the running time of our algorithm.

\paragraph{\bf The algorithm.}
We can now formally describe our algorithm.
The algorithm takes a parameter $ \delta \in (0, 0.5]$, which will be used as a
tradeoff between running time and the quality of approximation. 

\bigskip

\begin{tcolorbox}[title=Algorithm 1]                                                      
    
\begin{enumerate}
\itemsep2mm
\item Choose an arbitrary strategy profile $\prof \in \mixedstrats$.
\item Solve the steepest descent linear program with input $\prof$ to obtain~$\prof'
= Q(\prof)$.
%
%
%
\item  
Set $\prof:= \prof + \eps(\prof' - \prof)$, where
$\eps = \frac{\delta}{\delta + 2}$.
\item If $f(\prof) \leq 0.5 + \delta$ then stop, otherwise go to step 2.
\end{enumerate}

\end{tcolorbox}

\bigskip

A single iteration of this algorithm corresponds to executing steps 2, 3, and 4.
Since this only involves solving a single linear programs, it is clear that each
iteration can be completed in polynomial time.

The rest of this paper is dedicated to showing the following theorem, which is
our main result.
\begin{theorem}
\label{thm:main}
Algorithm 1 finds a $(0.5+\delta)$-NE after at most $O(\frac{1}{\delta^2})$ iterations.
\end{theorem}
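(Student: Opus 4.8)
The plan is to establish the theorem in two parts: first, a correctness-flavoured structural lemma showing that whenever the current profile $\prof$ has $f(\prof) > 0.5 + \delta$, the steepest-descent direction $\prof' = Q(\prof)$ has a \emph{strictly negative} gradient $Df^{\delta}(\prof, \prof')$, bounded away from $0$ by something like $-\frac{1}{2}(f(\prof) - \frac{1}{2})$ or a comparable expression; and second, a potential-function / progress argument showing that a step of size $\eps = \frac{\delta}{\delta+2}$ in that direction decreases $f$ by at least $\Omega(\delta^2)$ (or, more precisely, by an amount that, summed over iterations, forces termination in $O(1/\delta^2)$ steps). Since $f$ is bounded in $[0,1]$ and we start at some $f(\prof) \le 1$, a per-iteration decrease of $\Omega(\delta^2)$ immediately yields the $O(1/\delta^2)$ iteration bound.

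For the first part, I would fix $\prof$ with $f(\prof) = f > 0.5 + \delta$ and exhibit a specific feasible $\prof'$ for the steepest-descent LP whose objective value $w$ is small; since $Q(\prof)$ minimizes $w$, this upper-bounds $Df^{\delta}(\prof, Q(\prof))$. The natural candidate is the profile where every player in (a neighbourhood of) $\mathcal{K}(\prof)$ switches to a best response against $\prof$ — i.e., $\prof'_i$ a pure best response $\brs$ for $i \in \mathcal{K}(\prof)$ and $\prof'_j = \prof_j$ otherwise, or a convex combination with $\prof$. Plugging this into $Df^{\delta}_i(\prof, \prof')$ and using the definitions: the term $\max_{k \in \brd}\{(\pays{\prof'}{i})_k\}$ is controlled because $\prof'$ differs from $\prof$ only on coordinates in $\mathcal{K}(\prof)$ (and payoffs lie in $[0,1]$, so the perturbation is bounded); $\payt{x_i}{\prof'}{i}$ can be bounded below since $x_i$ is a distribution and payoffs are in $[0,1]$; and $-\payt{x'_i}{\prof}{i} + \pay{\prof}{i} = -f_i(\prof) + (\text{stuff}) $ picks up the $-f(\prof)$ that drives the gradient negative. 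The $\delta$ slack in $\brd$ versus $\brs$ is what keeps the $\max_{k \in \brd}$ term from being too large — this is precisely why the algorithm uses $\delta$-best responses, and it is the reason the guarantee degrades from the bimatrix $0.3393$ to $0.5 + \delta$. Combining these estimates should give $Df^{\delta}(\prof, Q(\prof)) \le -(f(\prof) - \tfrac12) + O(\delta)$ or similar, which is $\le -\frac{\delta}{2}$ (say) once $f(\prof) > 0.5 + \delta$ with a little room, possibly after also using Lemma~\ref{gradient-lem}'s reformulation as a sanity check on signs.

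For the second part, I would relate the actual decrease $f(\prof) - f(\prof + \eps(\prof' - \prof))$ to the linearized quantity $\eps \cdot Df^{\delta}(\prof,\prof', \eps)/\eps$, i.e., to $Df^{\delta}$. Because $f$ is a maximum of finitely many functions that are each piecewise-bilinear along the segment from $\prof$ to $\prof'$, there is a controlled second-order (curvature) error: moving a fraction $\eps$ along the direction gives a decrease of at least $\eps \cdot (-Df^{\delta}(\prof,\prof')) - C\eps^2$ for an absolute constant $C$ coming from the boundedness of payoffs in $[0,1]$. With $-Df^{\delta} \ge \frac{\delta}{2}$ from Part 1 and $\eps = \frac{\delta}{\delta+2} = \Theta(\delta)$, the decrease is at least $\Theta(\delta)\cdot\Theta(\delta) - C\cdot\Theta(\delta^2) = \Omega(\delta^2)$, provided the constants are chosen so the linear term dominates — which is exactly the role of the specific choice $\eps = \frac{\delta}{\delta+2}$. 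Summing, the algorithm cannot run more than $\frac{1}{\Omega(\delta^2)} = O(1/\delta^2)$ iterations before $f(\prof) \le 0.5 + \delta$ and step 4 halts.

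The main obstacle I anticipate is Part 1: constructing the right test direction $\prof'$ and carrying out the estimate on $\max_{k \in \brd}\{(\pays{\prof'}{i})_k\}$ cleanly. The subtlety is that when a player $i \in \mathcal{K}(\prof)$ changes strategy, this changes the payoff vectors $\pays{\cdot}{j}$ of neighbouring players $j$, and if several neighbours of $i$ are also in $\mathcal{K}(\prof)$, these perturbations interact; one must ensure the bound on the gradient does not degrade with the degree or the number of max-regret players, which is the whole point of the degree-independent normalization in Section~2.1. Handling the $\delta$-best-response slack correctly — showing it costs only an additive $O(\delta)$ in the gradient bound rather than something worse — is where the $0.5$ constant is pinned down, and getting that tight is the crux of the argument.
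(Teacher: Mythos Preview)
Your two-part structure matches the paper's: Part~1 corresponds to Lemma~\ref{lem:main} (every $\delta$-stationary point is a $(0.5+\delta)$-NE), and Part~2 to Lemmas~\ref{lem:gain-bound}--\ref{lem:iterations}. Your Part~2 outline (linearised decrease plus a quadratic error, giving an $\Omega(\delta^2)$ drop per step) is essentially the paper's argument. But Part~1 contains two misconceptions that would derail a detailed write-up.

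First, the test direction. The paper takes $\bar{\prof}$ to be a profile in which \emph{every} player (not only those in $\mathcal{K}(\prof)$) best-responds to $\prof$, and sets $\prof' = \tfrac{1}{2}(\bar{\prof}+\prof)$. A short direct calculation then yields, for each $i \in \mathcal{K}(\prof)$,
\[
Df^{\delta}_i(\prof,\prof') \;\le\; \tfrac{1}{2}\Bigl(\max_{k\in\brd}\bigl(\pays{\bar\prof}{i}\bigr)_k - \payt{x_i}{\bar\prof}{i}\Bigr) \;\le\; \tfrac{1}{2},
\]
using the cancellation $\max_{k\in\brd}\bigl(\pays{\prof}{i}\bigr)_k = \payt{\bar x_i}{\prof}{i}$ (since $\bar x_i$ is a best response) and the crude bound that payoffs lie in $[0,1]$. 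The $\tfrac12$ comes \emph{exactly} from the $\tfrac12$ in the mixture; there is no $O(\delta)$ term. Your direction (switch only players in $\mathcal{K}(\prof)$) can in fact be pushed through too, via $\max_k(\pays{\prof'}{i})_k - \payt{x_i}{\prof'}{i} \le 1$, which gives $Df^{\delta}_i \le 1 - f$ and hence $Df^{\delta} \le 1-2f$. But your stated reason --- that $\pays{\prof'}{i}$ is a small perturbation of $\pays{\prof}{i}$ because few coordinates change --- is false: all of $i$'s neighbours may lie in $\mathcal{K}(\prof)$, so the perturbation can be $\Theta(1)$. The argument works only because of the crude $[0,1]$ bound, not because the perturbation is small.

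Second, and more seriously, you have the role of the $\delta$-best responses backwards. Since $\brd \supseteq \brs$, the term $\max_{k\in\brd}\bigl(\pays{\prof'}{i}\bigr)_k$ is \emph{at least} the corresponding max over $\brs$; the $\delta$-slack never helps in Part~1, and the $0.5$ bound goes through identically with $\brs$ in place of $\brd$. The entire purpose of $\brd$ lies in Part~2: when you step by $\eps$, the argmax defining $\brp{\cdot}{i}$ can jump to a strategy outside $\brs$, and this is a \emph{first-order} discontinuity in the gradient, not a second-order curvature term. The paper's key technical step (Lemma~\ref{lem:d-lambda}) shows that with $\eps = \tfrac{\delta}{\delta+2} \le \tfrac{\delta}{2}$ the argmax cannot escape $\brd$, which is precisely what makes the one-step bound $f_{\mathrm{new}} - f \le \eps(\mathcal{D}-f) + \eps^2(1-\mathcal{D})$ valid. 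Your ``piecewise-bilinear with bounded curvature'' picture handles $\pay{\cdot}{i}$ but misses this best-response-set issue; that, not the interaction of perturbations among players in $\mathcal{K}(\prof)$, is the actual obstacle and the real reason for both the $\delta$-relaxation and the specific step size.
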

To prove Theorem~\ref{thm:main}, we will show two properties. Firstly, in
Section~\ref{sec:stationary2}, we show that our gradient descent algorithm
never gets stuck in a stationary point before it finds a $(0.5 + \delta)$-NE. To
do so, we define the notion of a \emph{$\delta$-stationary point}, and we show
that every $\delta$-stationary point is at least a $(0.5 + \delta)$-NE, which
then directly implies that the gradient descent algorithm will not get stuck
before it finds a $(0.5 + \delta)$-NE.

Secondly, in Section~\ref{sec:converge}, we prove the upper bound on the number of
iterations. To do this we show that, if an iteration of the algorithm starts at
a point that is not a $\delta$-stationary point, then that iteration will make a
large enough amount of progress. This then allows us to show that the algorithm
will find a $(0.5 + \delta)$-NE after $O(\frac{1}{\delta^2})$ many
iterations, and therefore the overall running time of the algorithm is
polynomial.


%
%
%

\section{Stationary points}
\label{sec:stationary2}

Recall that Definition~\ref{def:lp} gives a linear program for finding the
direction $\prof'$ that minimises $Df^{\delta}(\prof,\prof')$.
Our steepest descent procedure is able to make
progress whenever this gradient is negative, and so a stationary point is any
point $\prof$ for which $Df^{\delta}(\prof, \prof') \ge 0$. In fact, our analysis
requires us to consider $\delta$-stationary points, which we now define.



\begin{definition}[$\delta$-stationary point]
\label{def:stationary}
Let $\prof^*$ be a mixed strategy profile, and let $\delta > 0$.
We have that $\prof^*$ is a $\delta$-stationary point if for all $\prof' \in
\Delta$:
\begin{equation*}
Df^{\delta}(\prof^*, \prof') \geq - \delta.
\end{equation*}
\end{definition}


%
We now show that every $\delta$-stationary point of $f(\prof)$ is a
$(0.5+\delta)$-NE. Recall from Definition~\ref{def:dgrad} that:
\begin{equation*}
Df^{\delta}(\prof, \prof') = \max_{i \in \mathcal{K}(\prof)} Df^{\delta}_i (\prof, \prof') - f(\prof).
\end{equation*}
Therefore, if $\prof^*$ is a $\delta$-stationary point, we must have, for every
direction $\prof'$:
\begin{equation}
\label{eqn:stationary2}
f(\prof^*) \le 
\max_{i \in \mathcal{K}(\prof)} Df^{\delta}_i (\prof^*, \prof')
+ \delta.
\end{equation}
Since $f(\prof^*)$ is the maximum regret under the strategy profile $\prof^*$,
in order to show that $\prof^*$ is a $(0.5 + \delta)$-NE, we only have to find
some direction $\prof'$ such that that $\max_{i \in \mathcal{K}(\prof)}
Df^{\delta}_i (\prof, \prof') \le 0.5$. We do this in the following lemma.

\begin{lemma}
\label{lem:dfi-bound}
In every stationary point $\prof^*$, there exists a direction $\prof'$ such
that:
\begin{equation*}
\max_{i \in \mathcal{K}(\prof)} Df^{\delta}_i (\prof^*, \prof') \leq 0.5.
\end{equation*}
\end{lemma}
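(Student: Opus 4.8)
The plan is to exhibit the required direction $\prof'$ explicitly; the construction works at an arbitrary profile $\prof^*$ and, despite the name in the statement, does not use stationarity at all. For each player $i$ fix an arbitrary best response $b_i \in \Delta_{m_i}$ of $i$ against $\prof^*$, set $\mathbf b = (b_1,\ldots,b_n) \in \Delta$, and take $\prof' := \tfrac12\prof^* + \tfrac12\mathbf b$. The key structural observation is that, with the first argument frozen at $\prof^*$, the map $\prof' \mapsto Df^\delta_i(\prof^*,\prof')$ is convex: in \eqref{eq:delta-dfi} the term $\max_{k\in\brst}\{(\pays{\prof'}{i})_k\}$ is a maximum of functions that are linear in $\prof'$ (each coordinate $(\pays{\prof'}{i})_k$ equals $\bigl(\sum_{j\in\neighbours}A_{ij}x'_j\bigr)_k$), hence convex, while the remaining three terms of \eqref{eq:delta-dfi} are linear in $\prof'$ (the first two) and constant (the last). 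Consequently $Df^\delta_i(\prof^*,\prof') \le \tfrac12 Df^\delta_i(\prof^*,\prof^*) + \tfrac12 Df^\delta_i(\prof^*,\mathbf b)$, and it remains only to bound the two endpoints.

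For the endpoint $\prof' = \prof^*$: since $\brst$ always contains a true best response of $i$ to $\prof^*$, we have $\max_{k\in\brst}\{(\pays{\prof^*}{i})_k\} = \brp{\prof^*}{i}$, and the last three terms of \eqref{eq:delta-dfi} collapse to $-\pay{\prof^*}{i}$, so $Df^\delta_i(\prof^*,\prof^*) = \brp{\prof^*}{i} - \pay{\prof^*}{i} = f_i(\prof^*)$. For the endpoint $\prof' = \mathbf b$: I would bound the maximum term using the payoff normalization, $\max_{k\in\brst}\{(\pays{\mathbf b}{i})_k\} \le \brp{\mathbf b}{i} \le 1$; bound the next term by $-\payt{x_i}{\mathbf b}{i} \le 0$, using that every entry of $\pays{\mathbf b}{i}$ is nonnegative because player $i$'s minimum payoff is $0$; and use $\payt{b_i}{\prof^*}{i} = \brp{\prof^*}{i}$ (as $b_i$ is a best response), so the last two terms contribute $-\brp{\prof^*}{i} + \pay{\prof^*}{i} = -f_i(\prof^*)$. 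Hence $Df^\delta_i(\prof^*,\mathbf b) \le 1 - f_i(\prof^*)$. Combining the two endpoint bounds with convexity,
\[
Df^\delta_i(\prof^*,\prof') \;\le\; \tfrac12 f_i(\prof^*) + \tfrac12\bigl(1-f_i(\prof^*)\bigr) \;=\; \tfrac12
\]
for every player $i \in [n]$, and in particular $\max_{i\in\mathcal{K}(\prof^*)} Df^\delta_i(\prof^*,\prof') \le 0.5$.

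I do not anticipate a genuine obstacle; the content of the lemma is just that the two directions ``remain at $\prof^*$'' and ``move to a best-response profile'' are complementary — the first pays $f_i(\prof^*)$, the second pays $1-f_i(\prof^*)$ — so their midpoint pays exactly $\tfrac12$. The two points needing care are: (i) using \emph{both} halves of the normalization, that every player's maximum payoff is $1$ and minimum payoff is $0$, since the bound $Df^\delta_i(\prof^*,\mathbf b) \le 1 - f_i(\prof^*)$ is precisely what this buys and would fail at other payoff scales; and (ii) either recording the convexity of $Df^\delta_i(\prof^*,\cdot)$ as above, or, to sidestep it, arguing by cases: if $f(\prof^*)\le 0.5$ then $\prof'=\prof^*$ already gives $Df^\delta_i(\prof^*,\prof^*)=f_i(\prof^*)\le 0.5$ for every $i$, while if $f(\prof^*)>0.5$ then $\prof'=\mathbf b$ gives $Df^\delta_i(\prof^*,\mathbf b)\le 1-f_i(\prof^*)<0.5$ for $i\in\mathcal{K}(\prof^*)$.
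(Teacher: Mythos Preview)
Your proof is correct and follows essentially the same approach as the paper: the same direction $\prof' = \tfrac12(\prof^* + \mathbf b)$ is chosen, and the same three ingredients drive the bound --- subadditivity/convexity of the $\max$ term, the best-response identity $\payt{b_i}{\prof^*}{i} = \brp{\prof^*}{i}$, and the payoff normalization $0 \le u_i \le 1$. The paper substitutes $\prof'$ directly and simplifies line by line, arriving at $\tfrac12\bigl(\max_{k\in\brst}\{(\pays{\mathbf b}{i})_k\} - \payt{x^*_i}{\mathbf b}{i}\bigr) \le \tfrac12$, whereas you package the same computation as ``convexity plus two endpoint evaluations''; the arithmetic is identical.
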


\begin{proof}
First, define $\bar{\prof}$ to be a strategy profile in which 
each player $i \in [n]$ plays a best response against $\prof^*$.
We will set $\prof' = 
\frac{\bar{\prof}+\prof^*}{2}$. 
Then for each $i \in \mathcal{K}(\prof)$, we have that $Df^{\delta}_i(\prof^*,
\prof')$, is less than or equal to:
\begin{align*}
& 
\max_{k \in \brst} \left \{ \bigl ( \pays{\frac{\bar{\prof}+\prof^*}{2}}{i}\bigr )_k \right \} 
- \payt{x^*_i}{\frac{\bar{\prof}+\prof^*}{2}}{i} 
- \payt{\frac{\bar{x_i}+x^*_i}{2}}{\prof^*}{i} + 
\payt{x^*_i}{\prof^*}{i} \\
 & = 
\frac{1}{2} \cdot \max_{k \in \brst} \left \{ \bigl ( \pays{\bar{\prof}+\prof^*}{i} \bigr )_k \right \} 
- \frac{1}{2} \cdot \payt{x^*_i}{\bar{\prof}}{i} 
- \frac{1}{2} \cdot \payt{\bar{x_i}}{\prof^*}{i} \\
 & \leq \frac{1}{2} \cdot \left(
\max_{k \in \brst} \left \{ \bigl ( \pays{\bar{\prof}}{i} \bigr )_k \right \} +
\max_{k \in \brst} \left \{ \bigl ( \pays{\prof^*}{i} \bigr )_k \right \} 
- \payt{x^*_i}{\bar{\prof}}{i} 
- \payt{\bar{x_i}}{\prof^*}{i} \right) \\
 & = \frac{1}{2} \cdot \left(
\max_{k \in \brst} \left \{ \bigl ( \pays{\bar{\prof}}{i} \bigr )_k \right \} 
- \payt{x^*_i}{\bar{\prof}}{i} \right) \quad \text{because $\bar{x_i}$ is a b.r. to $x^*$}\\
 &  \leq \frac{1}{2} \cdot \max_{k \in \brst} \left \{ \bigl ( \pays{\bar{\prof}}{i} \bigr )_k \right \}  \\
 & \leq \frac{1}{2}.
\end{align*} 
Thus, the point $\prof'$ satisfies
$\max_{i \in \mathcal{K}(\prof)} Df^{\delta}_i (\prof^*, \prof') \leq 0.5$.
\qed
\end{proof}

We can sum up the results of the section  in the following lemma.
\begin{lemma}
\label{lem:main}
Every $\delta$-stationary point $\prof^*$ is a $(0.5 + \delta)$-Nash equilibrium.
\end{lemma}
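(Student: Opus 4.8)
The plan is to simply assemble the two ingredients that have already been developed in this section. First I would recall that, by Definition~\ref{def:stationary}, a $\delta$-stationary point $\prof^*$ satisfies $Df^{\delta}(\prof^*, \prof') \geq -\delta$ for \emph{every} $\prof' \in \Delta$. Unpacking the definition of $Df^{\delta}$ in Equation~\eqref{eq:dgrad} and moving the constant $f(\prof^*)$ across, this is exactly the inequality recorded in Equation~\eqref{eqn:stationary2}: $f(\prof^*) \le \max_{i \in \mathcal{K}(\prof)} Df^{\delta}_i(\prof^*, \prof') + \delta$, again valid for every direction $\prof'$.

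Next I would invoke Lemma~\ref{lem:dfi-bound}, which supplies one \emph{particular} direction $\prof'$ — concretely $\prof' = \frac{1}{2}(\bar{\prof} + \prof^*)$, where $\bar{\prof}$ is the profile in which every player best-responds to $\prof^*$ — for which $\max_{i \in \mathcal{K}(\prof)} Df^{\delta}_i(\prof^*, \prof') \le 0.5$. Specializing the bound from Equation~\eqref{eqn:stationary2} to this choice of $\prof'$ yields $f(\prof^*) \le 0.5 + \delta$. Since $f$ is precisely the maximum regret under $\prof^*$ (Equation~\eqref{eq:f}), the definition of an approximate equilibrium immediately gives that $\prof^*$ is a $(0.5+\delta)$-NE, which completes the argument.

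I expect no real obstacle at this stage: all the substantive work sits in Lemma~\ref{lem:dfi-bound} (whose proof exhibits the averaged best-response direction and exploits both the convexity of a maximum of affine functions and the fact that $\bar{x_i}$ is a best response to $\prof^*$, which cancels two of the four terms) and in the reformulation that produces Equation~\eqref{eqn:stationary2}. The only points worth double-checking are that the quantifiers line up correctly — stationarity holds for \emph{all} $\prof'$, so we are entitled to plug in the single good direction from Lemma~\ref{lem:dfi-bound} — and that the subtracted constant $f(\prof^*)$ inside $Df^{\delta}$ is exactly what converts $Df^{\delta}(\prof^*,\prof') \ge -\delta$ into the clean conclusion $f(\prof^*) \le 0.5 + \delta$.
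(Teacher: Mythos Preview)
Your proposal is correct and matches the paper's approach exactly: the paper presents Lemma~\ref{lem:main} as a direct summary of Equation~\eqref{eqn:stationary2} combined with Lemma~\ref{lem:dfi-bound}, with no additional proof given. Your write-up unpacks precisely these two ingredients and checks the quantifier logic, which is all that is needed.
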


\section{The time complexity of the algorithm}
\label{sec:converge}

%

In this section, we show that Algorithm 1 terminates after a polynomial number
of iterations. Let $\prof$ be a strategy profile that is considered by Algorithm
1, and let $\prof' = Q(\prof)$ be
the solution of the steepest descent LP for $\prof$. These two profiles will be
fixed throughout this section. 



We begin by proving a technical lemma that will be crucial for showing our
bound on the number of iterations. To simplify our notation, throughout this
section we define $f_{new} := f(\prof + \eps(\prof' - \prof))$ and $f :=
f(\prof)$. Furthermore, we define $\mathcal{D} = \max_{i \in [n]}
Df^{\delta}_i(\prof, \prof')$. The following lemma, which is proved in
Appendix~\ref{app:gain-bound}, gives a relationship between $f$ and $f_{new}$.

\begin{lemma}
\label{lem:gain-bound}
In every iteration of Algorithm $1$ we have:
\begin{equation}
\label{eq:gain}
f_{new} - f \leq 
\epsilon ( \mathcal{D} - f) + \epsilon^2 (1 - \mathcal{D}).
\end{equation}
\end{lemma}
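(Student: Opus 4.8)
The plan is to analyze how the maximum regret $f$ changes when we move a distance $\epsilon$ along the steepest descent direction, i.e. from $\prof$ to $\prof_\epsilon := \prof + \epsilon(\prof' - \prof) = (1-\epsilon)\prof + \epsilon\prof'$. The key point is that $f_{new} = f(\prof_\epsilon) = \max_i f_i(\prof_\epsilon)$, so it suffices to bound $f_i(\prof_\epsilon)$ for \emph{every} player $i$, not just those in $\mathcal{K}(\prof)$; this is why $\mathcal{D}$ is defined as a maximum over all of $[n]$ rather than over $\mathcal{K}(\prof)$. So the first step is to fix an arbitrary player $i$ and expand $f_i(\prof_\epsilon) = \brp{\prof_\epsilon}{i} - \pay{\prof_\epsilon}{i}$.

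For the payoff term, I would use multilinearity: $\pay{\prof_\epsilon}{i} = ((1-\epsilon)x_i + \epsilon x_i')^T \sum_{j \in \neighbours} A_{ij}((1-\epsilon)x_j + \epsilon x_j')$, which expands into four terms with coefficients $(1-\epsilon)^2$, $(1-\epsilon)\epsilon$, $\epsilon(1-\epsilon)$, $\epsilon^2$. For the best-response term, the idea is to \emph{not} take the true best response at $\prof_\epsilon$, but instead to use a pure strategy $k$ lying in the $\delta$-best-response set $\brd$ at the \emph{original} profile $\prof$ that maximizes $(\pays{\prof'}{i})_k$ — this is exactly the quantity appearing in the definition of $Df^\delta_i(\prof,\prof')$ in Equation~\eqref{eq:delta-dfi}. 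Then $\brp{\prof_\epsilon}{i} \geq (\pays{\prof_\epsilon}{i})_k = (1-\epsilon)(\pays{\prof}{i})_k + \epsilon(\pays{\prof'}{i})_k$. The reason to use $\brd$ rather than $\brs$ is that $(\pays{\prof}{i})_k \geq \brp{\prof}{i} - \delta$ for $k \in \brd$, and this slack of $\delta$ is what will eventually feed into the $O(1/\delta^2)$ iteration bound; but the direction of the inequality needs care — here we want a \emph{lower} bound on $\brp{\prof_\epsilon}{i}$, so actually we would choose $k$ to be the maximizer of $(\pays{\prof'}{i})_k$ over $\brd$ and then bound $(\pays{\prof}{i})_k$ from below. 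Hmm — I should double-check the sign bookkeeping here; the cleanest route is probably to observe that $\brp{\prof_\epsilon}{i}$ upper-bounds via $\brp{\prof_\epsilon}{i} \le (1-\epsilon)\brp{\prof}{i} + \epsilon \brp{\prof'}{i}$ by convexity of the max, but this throws away the $\delta$-slack. I expect the intended argument lower-bounds the best response using a fixed pure strategy, giving an upper bound on $-\pay{}{}$ is not the issue — rather it gives directly what's needed after recombining terms.

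Assembling: after substituting the expansions and collecting the terms by powers of $\epsilon$, one should recognize the $\epsilon^1$-coefficient as essentially $Df^\delta_i(\prof,\prof') - f_i(\prof)$ and the $\epsilon^2$-coefficient as something bounded by $1 - \mathcal{D}$ (using that all payoffs lie in $[0,1]$ after normalization, so cross-terms like $\payt{x_i'}{\prof'}{i}$ are at most $1$ and at least $0$). Then using $Df^\delta_i \le \mathcal{D}$ for each $i$ and $f \ge f_i(\prof)$... wait, actually we need $f = f(\prof) = \max_i f_i(\prof) \ge f_i(\prof)$, and we want the bound to come out as $\epsilon(\mathcal{D} - f) + \epsilon^2(1-\mathcal{D})$ uniformly in $i$. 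The subtlety is that replacing $f_i(\prof)$ by the larger $f$ in a term it is subtracted from only helps, but one must check the coefficient of $f$ matches. Taking the max over $i$ at the end gives $f_{new} - f \le \epsilon(\mathcal{D} - f) + \epsilon^2(1 - \mathcal{D})$.

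The main obstacle I anticipate is the careful sign/coefficient bookkeeping in the quadratic expansion — in particular making sure the $\delta$-best-response set is used in the correct direction so that the linear coefficient collapses exactly to $\mathcal{D} - f$ rather than something with an extra additive $\delta$ or a wrong-signed error, and confirming that the quadratic remainder is genuinely bounded by $1 - \mathcal{D}$ using only the $[0,1]$ normalization of payoffs and nonnegativity of the relevant cross terms. Everything else is routine multilinear algebra.
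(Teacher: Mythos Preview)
Your overall plan --- bound $f_i(\prof_\epsilon)$ for every $i$, expand $\pay{\prof_\epsilon}{i}$ bilinearly, identify the $\epsilon$-coefficient with $Df^\delta_i - f$ and the $\epsilon^2$-coefficient with something $\le 1-\mathcal{D}$, then take $\max_i$ --- matches the paper. But there is a genuine gap in the best-response step, and you already sensed it (``I should double-check the sign bookkeeping here'').

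You need an \emph{upper} bound on $\brp{\prof_\epsilon}{i}$, since $f_i(\prof_\epsilon) = \brp{\prof_\epsilon}{i} - \pay{\prof_\epsilon}{i}$. Plugging in a fixed pure strategy $k\in\brd$ gives only $\brp{\prof_\epsilon}{i} \ge (\pays{\prof_\epsilon}{i})_k$, which is useless here. The convexity bound $\brp{\prof_\epsilon}{i} \le (1-\epsilon)\brp{\prof}{i} + \epsilon\,\brp{\prof'}{i}$ is in the right direction but, as you note, it produces $\brp{\prof'}{i}$ rather than $\max_{k\in\brd}(\pays{\prof'}{i})_k$; the lemma is stated in terms of $\mathcal{D}=\max_i Df^\delta_i$, so you would end up with a strictly larger quantity and not the claimed inequality.

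The missing idea is this: the specific step size $\epsilon=\frac{\delta}{\delta+2}\le\frac{\delta}{2}$ is chosen precisely so that the best response at $\prof_\epsilon$ is still a $\delta$-best-response at $\prof$. Concretely, for any $k\notin\brd$ one has $(\pays{\prof}{i})_k < \brp{\prof}{i}-\delta$, and moving by $\epsilon$ can shift each coordinate of $\pays{\cdot}{i}$ by at most $\epsilon$ in either direction; with $\epsilon\le\delta/2$ this is not enough for any $k\notin\brd$ to overtake the maximum within $\brd$. In the paper's language this is the statement $\Lambda^\delta_i(\prof,\prof',\epsilon)=0$. Once you have that, you get the desired upper bound
\[
\brp{\prof_\epsilon}{i}=\max_{k\in\brd}(\pays{\prof_\epsilon}{i})_k
\;\le\;(1-\epsilon)\,\brp{\prof}{i}+\epsilon\max_{k\in\brd}(\pays{\prof'}{i})_k,
\]
with exactly the $\brd$-restricted maximum that appears in $Df^\delta_i$. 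Your handling of the $\epsilon^2$-term (bounding the cross-difference $u_i(\prof'-\prof)$ below by $Df^\delta_i-1$ using the $[0,1]$ normalization) is essentially what the paper does.
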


In the next lemma we prove that, if we are not in a $\delta$-stationary point, then
we have a bound on the amount of progress made in each iteration.  We use
this in order to bound the number of iterations needed before we reach a point
\prof where $f(\prof) \leq 0.5 + \delta$.

\begin{lemma}
\label{lem:discount}
Fix $\epsilon = \frac{\delta}{\delta + 2}$, where $0 < \delta \le 0.5$.
Either \prof is a $\delta$-stationary point or:
\begin{equation}
\label{eq:decrease}
f_{new} \leq \left( 1 - \left(\frac{\delta}{\delta + 2} \right)^2 \right) f.
\end{equation}
\end{lemma}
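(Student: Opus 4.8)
The plan is to combine Lemma~\ref{lem:gain-bound} with the definition of a $\delta$-stationary point, using the specific choice $\epsilon = \frac{\delta}{\delta+2}$. Suppose $\prof$ is \emph{not} a $\delta$-stationary point. Then there is a direction $\prof'$ with $Df^{\delta}(\prof,\prof') < -\delta$, and in particular the direction $\prof' = Q(\prof)$ returned by the steepest descent LP satisfies this, since it minimises $Df^{\delta}(\prof,\cdot)$. Unfolding Equation~\eqref{eq:dgrad}, this says $\mathcal{D}' := \max_{i \in \mathcal{K}(\prof)} Df^{\delta}_i(\prof,\prof') < f - \delta$. The quantity appearing in Lemma~\ref{lem:gain-bound} is $\mathcal{D} = \max_{i \in [n]} Df^{\delta}_i(\prof,\prof')$, so I first need to check that $\mathcal{D} = \mathcal{D}'$, i.e.\ that the maximum over all players is attained (or at least bounded) by a player in $\mathcal{K}(\prof)$; I expect this to follow from the structure of the steepest descent LP, whose constraints only bound the $l_i$ and $w$ for $i \in \mathcal{K}(\prof)$, together with a short argument that it is never beneficial to have $Df^{\delta}_i$ large for a non-maximal-regret player. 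In any case we get $\mathcal{D} \le f - \delta$.

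Next I substitute the bound $\mathcal{D} \le f - \delta$ into Equation~\eqref{eq:gain}. Since $f \le 1$ and $\mathcal{D} \le f - \delta$, we have $\mathcal{D} - f \le -\delta$ and $1 - \mathcal{D} \ge 1 - f + \delta \le 1$ actually needs care about sign — the coefficient $\epsilon^2(1-\mathcal{D})$ is nonnegative and we want an upper bound, so I should bound $1 - \mathcal{D} \le 1$ (using $\mathcal{D} \ge 0$, which holds because e.g.\ $Df^{\delta}_i(\prof,\prof) = 0$ for the trivial direction and the LP optimum is no worse, or more simply from the definition). This yields
\[
f_{new} - f \le \epsilon(-\delta) + \epsilon^2 = \epsilon(\epsilon - \delta).
\]
Now I plug in $\epsilon = \frac{\delta}{\delta+2}$: then $\epsilon - \delta = \frac{\delta - \delta(\delta+2)}{\delta+2} = \frac{-\delta(\delta+1)}{\delta+2} = -\delta(\delta+1)\cdot\frac{1}{\delta+2}$, so
\[
f_{new} - f \le -\frac{\delta}{\delta+2}\cdot\frac{\delta(\delta+1)}{\delta+2} = -\left(\frac{\delta}{\delta+2}\right)^2 (\delta+1) \le -\left(\frac{\delta}{\delta+2}\right)^2 f,
\]
where the last inequality uses $f \le 1 \le \delta+1$. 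Rearranging gives exactly Equation~\eqref{eq:decrease}.

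I expect the main obstacle to be the bookkeeping around $\mathcal{D}$ versus $\mathcal{D}'$ — i.e.\ relating $\max_{i\in[n]} Df^{\delta}_i$ (which is what Lemma~\ref{lem:gain-bound} controls, presumably because $f_{new}$ depends on the regret of \emph{all} players, not just those currently in $\mathcal{K}(\prof)$) to $\max_{i \in \mathcal{K}(\prof)} Df^{\delta}_i$ (which is what the stationarity hypothesis controls). The resolution should be that for a player $i \notin \mathcal{K}(\prof)$ we have $f_i(\prof) < f(\prof)$, and a direct estimate of $Df^{\delta}_i$ for such a player — bounding $\max_{k \in \brd}(\pays{\prof'}{i})_k \le 1$ and $\payt{x'_i}{\prof}{i} \le \brp{\prof}{i}$ — shows $Df^{\delta}_i(\prof,\prof')$ is small enough that it does not spoil the bound, possibly needing a slack term that is absorbed because $f > 0.5 + \delta$ at any non-terminating iteration. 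The rest is the elementary algebra above, which I would present compactly.
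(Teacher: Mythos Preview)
Your approach is essentially the paper's: invoke Lemma~\ref{lem:gain-bound}, take the non-stationary case $\mathcal{D} - f \le -\delta$, and push through to $f_{new} - f \le -\epsilon^2 f$. Two remarks on the details.

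First, the subtlety you flag about $\mathcal{D} = \max_{i\in[n]}Df^{\delta}_i$ versus $\mathcal{D}' = \max_{i\in\mathcal{K}(\prof)}Df^{\delta}_i$ is not addressed in the paper either; the paper simply writes ``$\mathcal{D} - f > -\delta$. Then, by definition, we are in a $\delta$-stationary point'' and moves on, so you will not find the extra bookkeeping argument you sketch.

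Second, your step ``$1-\mathcal{D} \le 1$ via $\mathcal{D}\ge 0$'' is not well grounded: in fact $Df^{\delta}_i(\prof,\prof) = f_i(\prof)$, not $0$, and since the LP \emph{minimises} this gives an upper bound on $\mathcal{D}'$, not a lower bound on $\mathcal{D}$. The paper avoids this issue altogether. It rewrites $\delta = \tfrac{2\epsilon}{1-\epsilon}$, multiplies $\mathcal{D}-f \le -\delta$ through by $(1-\epsilon)$, and after a short chain of rearrangements arrives at $-\epsilon^2 f - \epsilon^2 \ge \epsilon(\mathcal{D}-f) + \epsilon^2(1-\mathcal{D})$, then drops the $-\epsilon^2$. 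Equivalently, you can simply observe that the right-hand side of \eqref{eq:gain} is increasing in $\mathcal{D}$ (its slope is $\epsilon(1-\epsilon)>0$) and substitute the extreme value $\mathcal{D} = f-\delta$ directly; this yields $-\epsilon\delta + \epsilon^2(1 - f + \delta) \le -\epsilon^2 f$, which reduces to $\epsilon \le \tfrac{\delta}{1+\delta}$ and holds since $\epsilon = \tfrac{\delta}{\delta+2}$, with no need for $\mathcal{D}\ge 0$.
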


\begin{proof}
Recall that by Lemma \ref{lem:gain-bound} the gain in every iteration of the steepest descent is
\begin{align}
\label{fnew}
f_{new} - f \leq \epsilon(\mathcal{D} - f) + \epsilon^2(1 - \mathcal{D}).
\end{align}
We consider the following two cases:
\begin{enumerate}
\item[a)] $\mathcal{D} - f > -\delta$. Then, by definition, we are in a $\delta$-stationary point.
\item[b)] $\mathcal{D} - f \leq -\delta$.
%
We have set $\eps = \frac{\delta}{\delta + 2}$. If we solve for $\delta$ we get that 
$\delta =  \frac{2\eps}{1-\eps}$. 
Since $\mathcal{D} - f \leq-\delta$, we have that  $(\mathcal{D} - f)(1-\eps) \leq -2\eps$. 
Thus we have:
\begin{align*}
(\mathcal{D} - f)(\eps - 1) & \geq 2\eps \\
0 & \geq (\mathcal{D} - f)(1 - \eps) + 2\eps  \\
0 & \geq (\mathcal{D} - f) + \eps(2 - \mathcal{D} + f) \\
-\eps f - \eps & \geq (\mathcal{D} - f) + \eps(1- \mathcal{D}) \qquad \text{($\eps \geq 0$)}\\
-\eps^2 f - \eps^2 & \geq \eps(\mathcal{D} - f) + \eps^2(1- \mathcal{D}).
\end{align*}
Thus, since $\eps^2 \ge 0$ we get:
\begin{align*}
-\eps^2 f & \geq \eps(\mathcal{D} - f) + \eps^2(1- \mathcal{D})\\
          & \geq f_{new} - f & \text{According to \eqref{fnew}.}
\end{align*}
Thus we have shown that:
\begin{align}
\nonumber
f_{new} - f \leq & -\epsilon^2 f \\
\nonumber
f_{new} \leq & (1-\epsilon^2)f.
\end{align}
Finally, using the fact that $\eps = \frac{\delta}{\delta + 2}$, we get that
\begin{align*}
f_{new} \leq & \left( 1 - \left(\frac{\delta}{\delta + 2} \right)^2 \right) f.
\end{align*}
\end{enumerate}
\qed
\end{proof}
So, when the algorithm has not reached yet a $\delta$-stationary point, there is a decrease
on the value of $f$ that is at least as large as the bound specified in \eqref{eq:decrease} in every iteration of the gradient 
descent procedure.
In the following lemma we prove that after $O(\frac{1}{\delta^2})$ iterations of the 
steepest descent procedure the algorithm finds a point \prof where $f(\prof)\leq 0.5 + \delta$. 
\begin{lemma}
\label{lem:iterations}
After $O(\frac{1}{\delta^2})$ iterations of the steepest descent procedure the
algorithm finds a point \prof where $f(\prof) \leq 0.5 + \delta$.
\end{lemma}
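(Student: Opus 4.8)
The plan is to combine the per-iteration progress bound from Lemma~\ref{lem:discount} with the fact (Lemma~\ref{lem:main}) that a $\delta$-stationary point is already a $(0.5+\delta)$-NE, so that as long as the algorithm has not terminated, $f$ shrinks by a fixed multiplicative factor each iteration. Concretely, suppose the algorithm is at a profile $\prof$ with $f(\prof) > 0.5 + \delta$. Then $\prof$ cannot be a $\delta$-stationary point, since Lemma~\ref{lem:main} would force $f(\prof) \le 0.5 + \delta$. Hence the second alternative of Lemma~\ref{lem:discount} applies and $f_{new} \le \bigl(1 - (\tfrac{\delta}{\delta+2})^2\bigr) f$. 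Writing $\gamma := 1 - (\tfrac{\delta}{\delta+2})^2 \in (0,1)$ and letting $f^{(t)}$ denote the value of $f$ after $t$ iterations, we get $f^{(t)} \le \gamma^t f^{(0)} \le \gamma^t$, using that payoffs are normalised so $f^{(0)} \le 1$.

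Next I would solve for the number of iterations needed to push $f^{(t)}$ below $0.5 + \delta$. It suffices to have $\gamma^t \le 0.5$, i.e. $t \ge \frac{\ln 2}{-\ln \gamma} = \frac{\ln 2}{-\ln\bigl(1 - (\delta/(\delta+2))^2\bigr)}$. Using the elementary inequality $-\ln(1-x) \ge x$ for $x \in [0,1)$, this is at most $\frac{\ln 2}{(\delta/(\delta+2))^2} = \ln 2 \cdot \bigl(\frac{\delta+2}{\delta}\bigr)^2$. Since $\delta \le 0.5$, we have $\delta + 2 \le 2.5$, so $\bigl(\frac{\delta+2}{\delta}\bigr)^2 \le \bigl(\frac{2.5}{\delta}\bigr)^2 = \frac{6.25}{\delta^2}$, giving a bound of $O(1/\delta^2)$ iterations. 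After that many iterations we are guaranteed $f(\prof) \le 0.5 \le 0.5 + \delta$, so in particular the termination condition in step 4 of Algorithm~1 is met (possibly earlier), and the algorithm halts at a point with $f(\prof) \le 0.5 + \delta$.

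A small point to handle carefully is the logical structure of the dichotomy: Lemma~\ref{lem:discount} says that \emph{either} $\prof$ is a $\delta$-stationary point \emph{or} the decrease~\eqref{eq:decrease} holds, but if $\prof$ happens to be $\delta$-stationary, the algorithm is already done by Lemma~\ref{lem:main} (it is at a $(0.5+\delta)$-NE), and step~4 will detect this and stop. So in every iteration that does \emph{not} immediately terminate, the multiplicative decrease is in force, and the geometric-decay argument above goes through without circularity. I would phrase the induction as: for every iteration $t$ before termination, $f^{(t)} \le \gamma^t$.

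I do not expect a genuine obstacle here — this lemma is essentially bookkeeping on top of Lemma~\ref{lem:discount}. The only thing to be mildly careful about is the normalisation assumption $f^{(0)} \le 1$ (which holds because the payoff rescaling in Section~\ref{sec:prelim} forces each player's regret into $[0,1]$, so $f = \max_i f_i \le 1$), and the choice of constant in the $O(\cdot)$ so that the bound is stated cleanly; neither is hard. One could alternatively bound the iteration count directly without logarithms by noting $\gamma \le 1 - \delta^2/c$ for a constant $c$ and using $(1 - \delta^2/c)^{c/\delta^2} \le e^{-1} < 0.5$... actually $e^{-1} \approx 0.37 < 0.5$, so roughly $c/\delta^2$ iterations suffice, again $O(1/\delta^2)$. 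Either presentation works; I would go with whichever is shortest to write.
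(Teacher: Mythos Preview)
Your proposal is correct and follows the same skeleton as the paper: use Lemma~\ref{lem:main} to rule out $\delta$-stationarity whenever $f(\prof)>0.5+\delta$, then invoke Lemma~\ref{lem:discount} for a per-iteration multiplicative contraction, and iterate until $f$ drops below $0.5+\delta$. The only difference is bookkeeping in the final step: the paper converts the multiplicative decrease into an \emph{additive} per-step drop of at least $\bigl(\tfrac{\delta}{\delta+2}\bigr)^2\cdot 0.5$ (using $f(\prof_i)>0.5$) and divides total progress by this, whereas you keep the geometric form $f^{(t)}\le\gamma^t$ and bound $-\ln\gamma$ via $-\ln(1-x)\ge x$; both routes land on the same $((\delta+2)/\delta)^2=O(1/\delta^2)$ bound.
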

\begin{proof}
Let $\prof_1$, $\prof_2$, $\dots$, $\prof_k$ be the sequence of strategy profiles that are
considered by 
Algorithm 1. Since the algorithm terminates as soon as it finds a $(0.5 + \delta)$-NE, we have
$f(\prof_i) > 0.5 + \delta$ for every $i < k$. Therefore, for each $i < k$ we
we can apply
Lemma~\ref{lem:main} to argue that $\prof_i$ is not a $\delta$-stationary point,
which then allows us to apply Lemma~\ref{lem:discount} to obtain:
\begin{equation*}
f(\prof_{i+1}) \leq \left( 1 - \left(\frac{\delta}{\delta + 2} \right)^2 \right)
f(\prof_i).
\end{equation*}
So, the amount of progress made by the algorithm in iteration $i$ is:
\begin{align*}
f(\prof_i) - f(\prof_{i+1}) &\ge f(\prof_i) - \left( 1 - \left(\frac{\delta}{\delta + 2} \right)^2 \right)
f(\prof_i) \\
&= \left(\frac{\delta}{\delta + 2} \right)^2 f(\prof_i) \\
&\ge \left(\frac{\delta}{\delta + 2} \right)^2 \cdot 0.5.
\end{align*}
Thus, each iteration of the algorithm decreases the regret by at least
$(\frac{\delta}{\delta + 2})^2 \cdot 0.5$. The algorithm starts at a point
$\prof_1$ with $f(\prof_1) \le 1$, and terminates when it reaches a point
$\prof_k$ with $f(\prof_k) \le 0.5 + \delta$. Thus the total amount of progress
made over all iterations of the algorithm can be at most $1 - (0.5 + \delta)$.
Therefore, the number of iterations used by the algorithm can be at most:
\begin{align*}
\frac{1 - (0.5 + \delta)}{ \left(\frac{\delta}{\delta + 2}\right)^2 \cdot 0.5} 
& \le 
\frac{1 - 0.5}{ \left(\frac{\delta}{\delta + 2}\right)^2 \cdot 0.5}  \\
& = \frac{(\delta + 2)^2}{\delta^2} 
 = \frac{\delta^2}{\delta^2} + \frac{4\delta}{\delta^2} + \frac{4}{\delta^2}.
\end{align*}
Since $\delta < 1$, we have that the algorithm terminates after at
most $O(\frac{1}{\delta^2})$ iterations. \qed
\end{proof}
Lemma~\ref{lem:iterations} implies that that after polynomially many iterations
the algorithm finds a point such that $f(\prof) \leq 0.5 +\delta$, and by
definition such a point is a $(0.5 + \delta)$-NE. Thus we have completed the
proof of Theorem~\ref{thm:main}.

\section{Application: Two-player Bayesian games}

In this section, we define two-player Bayesian games, and show how our algorithm
can be applied in order to efficiently find a $(0.5+\delta)$-Bayesian Nash
equilibrium.
A two-player Bayesian game is played between a \emph{row} player and a
\emph{column} player. Each player has a set of possible \emph{types}, and at the
start of the game, each player is assigned a type by drawing from a known joint
probability distribution. Each player learns his type, but not the type of his
opponent. Our task is to find an approximate Bayesian Nash equilibrium (BNE).

We show that this can be reduced to the problem of finding an $\epsilon$-NE in a
polymatrix game, and therefore our algorithm can be used to efficiently find a
$(0.5+\delta)$-BNE of
a two-player Bayesian game. This section is split into two parts. In the first
part we formally define two-player Bayesian games, and approximate Bayesian Nash
equilibria. In the second part, we give the reduction from two-player Bayesian
games to polymatrix games.

\subsection{Definitions}

\paragraph{\bf Payoff matrices.}

We will use $k_1$ to denote the number of pure strategies of the row player and
$k_2$ to denote the number of pure strategies of the column player. Furthermore,
we will use $m$ to denote the number of types of the row player, and $n$ to
denote the number of types of the column player. 

For each pair of types $i \in [m]$ and $j \in [n]$, there is a $k_1 \times k_2$
bimatrix game $(R,C)_{ij} := (R_{ij}, C_{ij})$ that is played when the row
player has type $i$ and the column player has type $j$. We assume that all
payoffs in every matrix $R_{ij}$ and every matrix $C_{ij}$ lie in the range $[0,
1]$.

\paragraph{\bf Types.}

The distribution over types is specified by a joint probability distribution:
for each pair of types $i \in
[m]$ and $j \in [n]$, the probability that the row player is assigned type $i$ and
the column player is assigned type $j$ is given by $p_{ij}$. Obviously, we
have that: 
\begin{equation*}
\sum_{i=1}^m \sum_{j=1}^n p_{ij} = 1.
\end{equation*}
We also define some useful shorthands:
for all $i \in [m]$ we denote by $p^R_i$ ($p^C_j$) the probability that row (column)
player has type $i \in [m]$ ($j \in [n]$). Formally:
\begin{align*}
p^R_i = \sum_{j=1}^n p_{ij} \qquad \text{for all $i \in [m]$},\\
p^C_j = \sum_{i=1}^m p_{ij} \qquad \text{for all $j \in [n]$}.
\end{align*}
Note that $\sum_{i=1}^m p^R_i = \sum_{j=1}^n p^C_j = 1$.
Furthermore, we denote by $p^R_i(j)$ the conditional probability that 
type $j \in [n]$ will be chosen for column player given that type $i$ is chosen for row player. 
Similarly, we define $p^C_j(i)$ for the column player. Formally:
\begin{align*}
p^R_i(j) = \frac{p_{ij}}{p^R_i} \qquad \text{for all $i \in [m]$}\\
p^C_j(i) = \frac{p_{ij}}{p^C_j} \qquad \text{for all $j \in [n]$}.
\end{align*}
We can see that for given type $t=(i,j)$ we have that 
$p_{ij} = p^R_i \cdot p^R_i(j) = p^C_j \cdot p^C_j(i)$.

\paragraph{\bf Strategies.}
In order to play a Bayesian game, each player must specify a strategy for each
of their types. Thus, a strategy profile is a pair $(\profx, \profy)$, where
$\profx = (x_1, x_2, \dots, x_{m})$ such that each $x_i \in \Delta_{k_1}$, and
where $\profy = (y_1, y_2, \dots, y_{n})$ such that each $y_i \in \Delta_{k_2}$.
This means that, when the row player gets type $i \in [m]$ and the column player
gets type $j \in [n]$, then the game $(R_{ij}, C_{ij})$ will be played, and
the row player will use strategy $x_i$ while the column player will use
strategy~$y_j$.

Given a strategy profile $(\profx, \profy)$, we can define the expected payoff
to both players (recall that the players are not told their opponent's type).
\begin{definition}[Expected payoff]
Given a strategy profile $(\profx, \profy)$ and a type $t=(i,j)$, the expected payoff for the row player is given by:
\begin{align*}
u_R(x_i, \profy) & =   \sum_{j=1}^n p^R_i(j) \cdot x^T_i R_{ij}y_j \\
& =   x^T_i \sum_{j=1}^n p^R_i(j) \cdot R_{ij}y_j 
\end{align*}
Similarly, for the column player the expected payoff is:
\begin{align*}
u_C(\profx, y_j) &=  y^T_j \sum_{i=1}^m p^C_j(i) \cdot C_{ij}^T x_i.  
\end{align*}
\end{definition}

\paragraph{\bf Rescaling.}
Before we define approximate equilibria for two-player Bayesian games, we first
rescale the payoffs. Much like for polymatrix games, rescaling is needed to
ensure that an $\epsilon$-approximate equilibrium has a consistent meaning. Our
rescaling will ensure that, for every possible pair of types, both player's
expected payoff uses the entire range $[0, 1]$. 

For each type $i$ of the row player, we use $\pmax^i_R$ to denote the maximum
expected payoff for the row player when he has type $i$, and we use
$\pmin^i_R$ to denote the minimum expected payoff for the row player when he has
type $i$. Formally, these are defined to be:
\begin{align*}
%
\pmax^i_R &=  \max_{a \in [k_1]} 
\sum_{j=1}^n 
\max_{b \in [k_2]} \left( p^R_i(j) \cdot R_{ij} \right)_{a, b}, \\
%
%
\pmin^i_R &=  \min_{a \in [k_1]} \sum_{j=1}^n \min_{b \in [k_2]}
\left( p^R_i(j) \cdot R_{ij} \right)_{a, b}.
\end{align*}
Then we apply the transformation 
$T_R^i(\cdot)$ to every element~$z$ of $R_{ij}$, for all types $j$ of the
column player, where: 
\begin{align}
\label{eq:transformR}
T_R^i(z) := \frac{1}{\pmax^i_R-\pmin^i_R} \cdot \left(z - \frac{\pmin^i_R}{n} \right).
\end{align}
Similarly, we transform all payoff matrices for the column player using
\begin{align}
\label{eq:transformC}
T_C^j(z) := \frac{1}{\pmax^j_C-\pmin^j_C} \cdot \left(z - \frac{\pmin^j_C}{m} \right), 
\end{align}
where $\pmax^j_C$ and $\pmin^j_C$ are defined symmetrically. Note that, after
this transformation has been applied, both player's expected payoffs lie in the
range $[0, 1]$. Moreover, the full range is used: there exists a strategy for
the column player against which one of the row player's strategies has expected
payoff $1$, and there exists a strategy for the column player against which one
of the row player's strategies has expected payoff $0$. From now on we will
assume that the payoff matrices have been rescaled in this way.




We can now define approximate Bayesian Nash equilibria for a two-player
Bayesian game.

\begin{definition}[Approximate Bayes Nash Equilibrium ($\eps$-BNE)]
Let $(\profx, \profy)$ be a strategy profile. The profile $(\profx, \profy)$ is
an $\eps$-BNE iff the following conditions hold:
\begin{align}
u_R(x_i, \profy) & \geq u_R(x'_i, \profy) - \eps \quad \text{for all $x'_i \in
\Delta^{k_1}$} \quad \text{for all $i \in [m]$},\\
u_C(\profx, y_j) & \geq u_C(\profx, y'_j) - \eps \quad \text{for all $y'_j \in \Delta^{k_2}$} \quad \text{for all $j \in [n]$}.
\end{align}
\end{definition}

\subsection{The reduction}
In this section we reduce in polynomial time the problem of computing an $\eps$-BNE for a
two-player Bayesian game $\mathcal{B}$ to the problem of computing an $\eps$-NE
of a polymatrix game $\mathcal{P(B)}$. We describe the construction of $\mathcal{P(B)}$  
and prove that every $\eps$-NE for $\mathcal{P(B)}$ maps to an $\eps$-BNE of $\mathcal{B}$.

\paragraph{\bf Construction.}
Let $\mathcal{B}$ be a two-player Bayesian game where the row player has $m$ types and $k_1$ pure strategies
and the column player has $n$ types and $k_2$ pure strategies. 
We will construct a polymatrix game $\mathcal{P(B)}$ as follows.

The game has $m + n$ players. We partition the set of players $[m + n]$ into two
sets: the set $K = \{1, 2, \dots, m\}$ will represent the types of the row
player in $\mathcal{B}$, while the set $L = \{m+1, m+2, \dots, m+n\}$ will
represent the types of the column player in $\mathcal{B}$.
The underlying graph that shows the interactions between the players is a
complete bipartite graph $G = (K \cup L, E)$,  where
every player in $K$ (respectively $L$) plays a bimatrix game with every player
in $L$ (respectively $K$).  The bimatrix game played between vertices $v_i \in
K$ and $v_j \in L$ is defined to be $(R^*_{ij}, C^*_{ij})$, where:
\begin{align}
\label{eq:rstar}
R^*_{ij} &:= p^R_i(j) \cdot R_{ij}\\
\label{eq:cstar}
C^*_{ij} &:= p^C_j(i) \cdot C_{ij}
\end{align}
for all $i \in [m]$ and $j \in [n]$.


Observe that, for each player $i$ in the $K$, the matrices $R^*_{ij}$ all have
the same number of rows, and for each player $j \in L$, the matrices $C^*_{ij}$
all have the same number of columns. Thus, $\mathcal{P(B)}$ is a valid
polymatrix game.
Moreover, we clearly have that $\mathcal{P(B)}$ has the same size as the
original game $\mathcal{B}$. Note that, since we have assumed that the Bayesian
game has been rescaled, we have that for every player in $\mathcal{P(B)}$ 
the minimum (maximum) payoff achievable under pure strategy profiles is $0$
($1$), so no further scaling is needed in order to apply our algorithm.

We can now prove that every $\epsilon$-NE of the polymatrix game is also an
$\epsilon$-BNE of the original two-player Bayesian game, which is the main
result of this section.

\begin{theorem}
Every $\eps$-NE of $\mathcal{P(B)}$ is a $\eps$-BNE for $\mathcal{B}$.
\end{theorem}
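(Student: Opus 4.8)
The plan is to exhibit a payoff-preserving correspondence between the strategy profiles of $\mathcal{P(B)}$ and those of $\mathcal{B}$, and then to observe that the $\eps$-NE conditions for $\mathcal{P(B)}$ translate term by term into the $\eps$-BNE conditions for $\mathcal{B}$. Concretely, given a strategy profile $\prof = (z_1, \dots, z_{m+n})$ of $\mathcal{P(B)}$ (so $z_i \in \Delta_{k_1}$ for $i \in K$ and $z_{m+j} \in \Delta_{k_2}$ for $j \in [n]$), I associate the Bayesian strategy profile $(\profx,\profy)$ with $x_i = z_i$ for $i \in [m]$ and $y_j = z_{m+j}$ for $j \in [n]$; this is clearly a bijection. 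The heart of the argument is the payoff-correspondence claim: for every $v_i \in K$ we have $\pay{\prof}{v_i} = u_R(x_i,\profy)$, and for every $v_j \in L$ we have $\pay{\prof}{v_j} = u_C(\profx,y_j)$.

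To prove the correspondence I would expand the polymatrix payoff~\eqref{payoff}: player $v_i \in K$ has neighbour set $L$, so $\pay{\prof}{v_i} = x_i^T \sum_{j=1}^n R^*_{ij} y_j = x_i^T \sum_{j=1}^n p^R_i(j) R_{ij} y_j$ by~\eqref{eq:rstar}, which is exactly the definition of $u_R(x_i,\profy)$. For a player $v_j \in L$ the only bookkeeping point is that the polymatrix convention stores the second player's payoff matrix transposed, so the edge $\{v_i,v_j\}$ contributes $y_j^T (C^*_{ij})^T x_i = x_i^T C^*_{ij} y_j$ to $v_j$; summing over $i$ and using~\eqref{eq:cstar} gives $\pay{\prof}{v_j} = \sum_{i=1}^m p^C_j(i)\, x_i^T C_{ij} y_j = u_C(\profx,y_j)$. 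Since in a polymatrix game a player's payoff depends only on his own strategy and those of his neighbours, a unilateral switch $x_i \mapsto x'_i$ by $v_i$ leaves $\profy$ and every other $x_{i'}$ untouched and changes $v_i$'s payoff to $u_R(x'_i,\profy)$; the analogous statement holds for $L$. Because the best-response payoff~\eqref{eq:brp} is the maximum of a linear function over a simplex, $\brp{\prof}{v_i} = \max_{x'_i \in \Delta_{k_1}} u_R(x'_i,\profy)$, and likewise $\brp{\prof}{v_j} = \max_{y'_j \in \Delta_{k_2}} u_C(\profx,y'_j)$.

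Finally I would put these together. If $\prof$ is an $\eps$-NE of $\mathcal{P(B)}$ then $f_t(\prof) = \brp{\prof}{t} - \pay{\prof}{t} \leq \eps$ for every player $t \in [m+n]$. Applied to $t = v_i \in K$ this says $u_R(x_i,\profy) \geq u_R(x'_i,\profy) - \eps$ for all $x'_i \in \Delta_{k_1}$; applied to $t = v_j \in L$ it says $u_C(\profx,y_j) \geq u_C(\profx,y'_j) - \eps$ for all $y'_j \in \Delta_{k_2}$. Quantifying over all $i \in [m]$ and all $j \in [n]$, these are precisely the two families of inequalities in the definition of an $\eps$-BNE, so $(\profx,\profy)$ is an $\eps$-BNE of $\mathcal{B}$. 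The correspondence is applied to the rescaled payoffs throughout, which is legitimate because $\mathcal{P(B)}$ is constructed from the already-rescaled game $\mathcal{B}$ via~\eqref{eq:rstar}--\eqref{eq:cstar}. The only genuine obstacle is the transpose/index bookkeeping in the payoff-correspondence claim for the $L$-players; once that is pinned down, the rest is immediate.
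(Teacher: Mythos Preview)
Your proposal is correct and follows essentially the same route as the paper: both set up the obvious bijection between strategy profiles, verify the payoff identity $\pay{\prof}{v_i} = u_R(x_i,\profy)$ (and its $L$-analogue) by expanding the polymatrix payoff with $R^*_{ij} = p^R_i(j)R_{ij}$ and $C^*_{ij} = p^C_j(i)C_{ij}$, and then read off the $\eps$-BNE inequalities directly from the $\eps$-NE condition. The only slip is that in your transpose computation the intermediate expression should read $y_j^T (C^*_{ij})^T x_i = x_i^T C^*_{ij} y_j$ (with the star), though your summed conclusion is correct.
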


\begin{proof}

Let $\profz = (x_1, \ldots, x_m, y_1, \ldots, y_n)$ be an $\eps$-NE for
$\mathcal{P(B)}$. This mean that no player can gain more than \eps by
unilaterally changing his strategy. We define the strategy profile $(\profx,
\profy)$ for $\mathcal{B}$ where $\profx = (x_1, \dots, x_m)$ and $\profy =
(y_1, \dots, y_n)$, and we will show that $(\profx, \profy)$ is an
$\epsilon$-BNE for $\mathcal{B}$.

Let $i \in K$ be a player.
Since, $\profz$ is an $\eps$-NE of $\mathcal{P(B)}$, we have:
\begin{align*}
u_i(x_i, \profz) \geq u_i(x'_i, \profz) - \eps \quad \text{for all $x'_i \in \Delta^{k_1}$}.
\end{align*}
By construction, we can see that player $i$ only interacts with the players from
$L$. Hence his payoff can be written as:
\begin{equation*}
u_i(x_i, \profz) = x^T_i \sum_{j=1}^n R^*_{ij}y_j = u^R(x_i, \profy)
\end{equation*}
and since we are in an $\eps$-NE, we have: 
\begin{align}
\label{eq:proof1}
u^R(x_i, \profy) \geq u^R(x'_i, \profy) - \eps \quad \text{for all $x'_i \in \Delta^{k_1}$}.
\end{align}
This is true for all $i \in K$, thus it is true for all $i \in [m]$.

Similarly, every player $j \in L$ interacts only with players form $K$, thus:
\begin{equation*}
u^C(\profx, y_j) = y^T_j \sum_{i=1}^m (C^*_{ij})^Tx_i.
\end{equation*}
and since we are in an $\eps$-NE we have: 
\begin{align}
\label{eq:proof2}
u^C(\profx, y_j) \geq u_C(\profx, y'_j) - \eps \quad \text{for all $y'_j \in \Delta^{k_2}$}
\end{align} 
and this is true for all $j \in K$, thus it is true for all $j \in [n]$.

Combining now the fact that Equation~\eqref{eq:proof1} is true for all $i \in
[m]$ and that Equation \eqref{eq:proof2} is true for all $j \in [m]$, it is easy
to see that the strategy profile $(\profx, \profy)$ is an $\eps$-BNE for
$\mathcal{B}$.
\qed
\end{proof}

Applying Algorithm 1 to $\mathcal{P(B)}$ thus gives us the following.
\begin{theorem}
A $(0.5+\delta)$-Bayesian Nash equilibrium of a two-player Bayesian game
$\mathcal{B}$ can be found in time polynomial in the input size of $\mathcal{B}$
and $1/\delta$. 
 \end{theorem}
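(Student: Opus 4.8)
The plan is to assemble this statement directly from the two machine parts already in place: the reduction $\mathcal{B} \mapsto \mathcal{P(B)}$ together with the preceding theorem that every $\eps$-NE of $\mathcal{P(B)}$ yields an $\eps$-BNE of $\mathcal{B}$, and Theorem~\ref{thm:main}, which says Algorithm~1 finds a $(0.5+\delta)$-NE of a polymatrix game in $O(1/\delta^2)$ iterations, each running in polynomial time. So the argument is essentially a composition: rescale, reduce, run Algorithm~1, map back.

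Concretely, I would proceed in the following steps. First, given $\mathcal{B}$, apply the rescaling transformations $T_R^i(\cdot)$ and $T_C^j(\cdot)$ from \eqref{eq:transformR} and \eqref{eq:transformC}; this is a per-entry affine map and is computable in time polynomial in the input size of $\mathcal{B}$. Second, build $\mathcal{P(B)}$ as in the construction: $m+n$ players on the complete bipartite graph $K \cup L$, with edge bimatrix games $(R^*_{ij}, C^*_{ij})$ given by \eqref{eq:rstar} and \eqref{eq:cstar}. As already observed in the text, the row dimensions are consistent within each $K$-player and the column dimensions within each $L$-player, so $\mathcal{P(B)}$ is a valid polymatrix game, and it has size comparable to that of $\mathcal{B}$ (at most $mn$ bimatrix games, each $k_1 \times k_2$). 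Moreover, because we rescaled $\mathcal{B}$ first, the construction already guarantees that in $\mathcal{P(B)}$ every player's minimum (resp. maximum) payoff under pure profiles is $0$ (resp. $1$), so Algorithm~1 applies with no further normalization. Third, run Algorithm~1 on $\mathcal{P(B)}$ with parameter $\delta$; by Theorem~\ref{thm:main} it terminates after $O(1/\delta^2)$ iterations with a $(0.5+\delta)$-NE $\profz = (x_1,\ldots,x_m,y_1,\ldots,y_n)$, and each iteration solves one steepest-descent LP over $\mathcal{P(B)}$, hence runs in time polynomial in the size of $\mathcal{P(B)}$, which is polynomial in the size of $\mathcal{B}$. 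Fourth, apply the reduction theorem: $\profz$ induces the strategy profile $(\profx, \profy) = (x_1,\ldots,x_m,y_1,\ldots,y_n)$ for $\mathcal{B}$, which is a $(0.5+\delta)$-BNE.

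Multiplying the iteration count $O(1/\delta^2)$ by the polynomial per-iteration cost and adding the polynomial-time cost of the rescaling and the construction of $\mathcal{P(B)}$ gives a total running time polynomial in the input size of $\mathcal{B}$ and in $1/\delta$, which is exactly the claim. There is no real obstacle here — the only thing to be careful about is the bookkeeping that the size of $\mathcal{P(B)}$, and hence the size of each steepest-descent LP, is bounded polynomially in the size of $\mathcal{B}$ (so that "polynomial in the input size of $\mathcal{P(B)}$" can be replaced by "polynomial in the input size of $\mathcal{B}$"), and that the rescaling constants $\pmax^i_R, \pmin^i_R$ and their column analogues are rationals of polynomial bit-length, so all arithmetic stays polynomial. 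With that verified, the theorem follows immediately by composing the reduction with the guarantee of Algorithm~1.
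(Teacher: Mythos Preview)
Your proposal is correct and matches the paper's approach exactly: the paper simply states that the theorem follows by applying Algorithm~1 to $\mathcal{P(B)}$ and invoking the reduction theorem, without spelling out any further details. If anything, you have been more careful than the paper about the size and bit-length bookkeeping.
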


\section{Conclusions and open questions}

We have presented a polynomial-time algorithm that finds a $(0.5+\delta)$-Nash
equilibrium of a polymatrix game for any $\delta>0$.
Though we do not have examples that show that the approximation guarantee is
tight for our algorithm, we do not see an obvious approach to prove a better
guarantee. 
The initial choice of strategy profile affects our algorithm, and it is
conceivable that one may be able to start the algorithm from an efficiently
computable profile with certain properties that allow a better approximation
guarantee.
One natural special case is when there is a constant number of players, which 
may allow one to derive new strategy profiles from a stationary point as done
by Tsaknakis and Sprirakis~\cite{TS}.
It may also be possible to develop new techniques when the number of pure
strategies available to the players is constant, or when the structure of the
graph is restricted in some way.
For example, in the games arising from two-player Bayesian games, the graph is
always bipartite.


This paper has considered $\eps$-Nash equilibria, which are the most
well-studied type of approximate equilibria.
However,~$\epsilon$-Nash equilibria have a drawback: since they only require
that the expected payoff is within~$\epsilon$ of a pure best response, it is
possible that a player could be required to place probability
on a strategy that is arbitrarily far from being a best response. 
An alternative, stronger, notion is an 
\emph{$\epsilon$-well supported approximate Nash equilibrium}
($\epsilon$-WSNE).
It requires that players only place probability on strategies that have payoff
within $\epsilon$ of a pure best response. 
Every $\epsilon$-WSNE is an $\epsilon$-Nash, but the converse is not true.
For bimatrix games, the best-known additive approximation
that is achievable in polynomial time gives a
$\bigl(\frac{2}{3}-0.0047\bigr)$-WSNE~\cite{FGSS12}.
It builds on the 
algorithm given by Kontogiannis and Spirakis that achieves a
$\frac{2}{3}$-WSNE in polynomial time~\cite{KS}. 
Recently a polynomial-time algorithm with a better approximation guarantee
have been given for \emph{symmetric} bimatrix games~\cite{CFJ14}.
Note, it has been shown that there is a PTAS for finding $\epsilon$-WSNE of
bimatrix games if and only if
there is a PTAS for $\epsilon$-Nash~\cite{DGP,CDT}.
For $n$-player games with $n>2$ there has been very little work on developing
algorithms for finding $\epsilon$-WSNE. 
This is a very interesting direction, both in general and when $n>2$ is a
constant.


\begin{acknowledgements}
We thank Aviad Rubinstein for alerting us to the two-player Baysesian games
application, and Haralampos Tsaknakis for feedback on earlier versions of this
paper.
\end{acknowledgements}

\bibliographystyle{spmpsci}
\bibliography{references}

\newpage 
\appendix
\section{Proof of Lemma~\ref{gradient-lem}}
\label{app:gradient-lem}

Before we begin with the proof, we introduce the following notation.
For a player $i \in [n]$, given a strategy profile $\prof$ and a
subset of $i$'s pure strategies $S \subseteq [m_i]$, we use 
$\maxed{\prof}{S}{i}$  for taking the maximum of the payoffs of $i$ when
the others play according to \prof, and player $i$ is restricted to pick elements
from $S$:
\begin{equation*}
\maxed{\prof}{S}{i} := \max_S \pays{\prof}{i}.
\end{equation*}

In order to find the gradient, we have to calculate the variation of $f_i$ along
the direction $\prof'-\prof$, by evaluating $f(\bar{\prof})$ for points
$\bar{\prof}$ of the form
$$
\bar{\prof} := \prof + \epsilon(\prof' - \prof) = (1 - \eps) \cdot \prof + \epsilon \cdot \prof'.
$$
Recall from~\eqref{eq:regret}, that for $\bar{\prof} \in \mixedstrats$ we have that
$f_i(\bar{\prof}) := \brp{\bar{\prof}}{i} - \pay{\bar{\prof}}{i}$.
In order to rewrite $\brp{\bar{\prof}}{i}$ we introduce notation $\Lambda_i(\prof, \prof', \eps)$ as follows.
\begin{definition}
Given $(\prof, \prof',\eps)$ and $S = \brs$ we define 
$\Lambda_i(\prof, \prof', \eps)$ as:
\begin{equation}
\label{eq:lambda}
\Lambda_i(\prof, \prof', \eps) :=  \max \left\{0, \max_{k \in \bar{S}}\{(\pays{\bar{\prof}}{i})_k\} - \max_{l\in S}\{ (\pays{\bar{\prof}}{i})_l\} \right\}.
\end{equation}
\end{definition}

In the following technical lemma we provide an expression for 
$\brp{\bar{\prof}}{i}$.
%
In order to rewrite $\brp{\bar{\prof}}{i}$, we use the
following simple observation.
Consider a multiset of numbers  $\{a_1,\ldots,a_n\}$, and the index sets $S \subseteq
[n]$ and $\bar{S} = [n] \setminus S$. We have the following identity:
\begin{align}
\label{eq:split_max}
\max\{ a_1,\ldots,a_n\} \equiv \max_{j\in S}\{ a_j \} + \max \left \{0,\ \max_{k \in
\bar{S}}\{ a_k \} - \max_{j\in S}\{ a_j \} \right\}.
\end{align}
In the following lemma, we use this identity with $S = \brs$ to rewrite
$\brp{\bar{\prof}}{i}$.
\begin{lemma}
\label{om1-label}
Given profiles $\prof$ and $\prof'$ in $\mixedstrats$ and a player $i \in [n]$,
let $S = \brs$.
We have:
\begin{align}
\label{eq:reform-u-star}
\brp{(1 - \eps) \cdot \prof + \epsilon \cdot \prof')}{i}
&  = 
(1 - \epsilon) \cdot \maxed{\prof}{S}{i} 
+
\epsilon \cdot \maxed{\prof'}{S}{i}
+ \Lambda_i(\prof, \prof', \eps).
\end{align}
\end{lemma}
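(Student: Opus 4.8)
Lemma~\ref{om1-label} is purely a statement about the function $\brp{\cdot}{i}$, so the plan is to unwind the definitions and apply the elementary identity~\eqref{eq:split_max} pointwise. First I would recall that $\brp{\bar{\prof}}{i} = \max_{k \in [m_i]} \bigl( \pays{\bar{\prof}}{i} \bigr)_k$ by~\eqref{eq:brp}, and that $\bar{\prof} = (1-\eps)\prof + \eps\prof'$. The key observation is that the payoff vector is linear in the strategy profile: since $\pays{\prof}{i} = \sum_{j \in \neighbours} A_{ij}x_j$ depends linearly on the $x_j$, we have $\pays{\bar{\prof}}{i} = (1-\eps)\pays{\prof}{i} + \eps\pays{\prof'}{i}$, and hence for each pure strategy $k$, $\bigl(\pays{\bar{\prof}}{i}\bigr)_k = (1-\eps)\bigl(\pays{\prof}{i}\bigr)_k + \eps\bigl(\pays{\prof'}{i}\bigr)_k$.

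Next I would apply the identity~\eqref{eq:split_max} with the multiset $\{a_k\}_{k \in [m_i]}$ given by $a_k = \bigl(\pays{\bar{\prof}}{i}\bigr)_k$ and with $S = \brs$. This immediately yields
\[
\brp{\bar{\prof}}{i} = \max_{k \in S}\{a_k\} + \max\Bigl\{0,\ \max_{k \in \bar S}\{a_k\} - \max_{k \in S}\{a_k\}\Bigr\},
\]
and the second term is exactly $\Lambda_i(\prof,\prof',\eps)$ by its Definition~\eqref{eq:lambda}. So it remains to show that $\max_{k \in S}\{a_k\} = (1-\eps)\maxed{\prof}{S}{i} + \eps\maxed{\prof'}{S}{i}$. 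Here is where I would use the crucial property that $S = \brs$ consists of the indices that \emph{simultaneously} maximise $\pays{\prof}{i}$ over all of $[m_i]$: for every $k \in S$, $\bigl(\pays{\prof}{i}\bigr)_k = \brp{\prof}{i} = \maxed{\prof}{[m_i]}{i}$, a common constant. Therefore for $k \in S$,
\[
a_k = (1-\eps)\maxed{\prof}{S}{i} + \eps\bigl(\pays{\prof'}{i}\bigr)_k,
\]
and taking the maximum over $k \in S$ — where the first summand is constant and nonnegative coefficient $\eps \geq 0$ multiplies the second — gives $\max_{k \in S}\{a_k\} = (1-\eps)\maxed{\prof}{S}{i} + \eps\max_{k \in S}\bigl(\pays{\prof'}{i}\bigr)_k = (1-\eps)\maxed{\prof}{S}{i} + \eps\maxed{\prof'}{S}{i}$. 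Combining the two displays gives~\eqref{eq:reform-u-star}.

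I expect the only subtle point — the "main obstacle", though it is mild — to be the step where one splits $\max_{k \in S}\{a_k\}$ into the two separate maxima. This relies essentially on the fact that $\bigl(\pays{\prof}{i}\bigr)_k$ is \emph{constant} over $k \in S$ (because $S$ is precisely the argmax set), so that the $\eps$-term can be maximised independently; for a generic index set $S$ this split would fail. One should also note that the coefficients $1-\eps$ and $\eps$ are nonnegative (which holds since $\eps \in [0,1]$), so that $\max$ distributes over the scaled-and-shifted second term. No other case analysis is needed, and the argument is otherwise a direct substitution.
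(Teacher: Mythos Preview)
Your proof is correct and follows essentially the same approach as the paper: apply the identity~\eqref{eq:split_max} with $S=\brs$ to isolate $\Lambda_i$, then use the fact that $\bigl(\pays{\prof}{i}\bigr)_k$ is constant on $S$ to split the remaining maximum into $(1-\eps)\maxed{\prof}{S}{i}+\eps\maxed{\prof'}{S}{i}$. You are slightly more explicit than the paper in noting that $\eps\in[0,1]$ is needed for the final splitting step, but otherwise the arguments are identical.
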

\begin{proof}
\begin{align*}
\brp{\bar{\prof}}{i} & =\brp{(1 - \eps) \cdot \prof + \epsilon \cdot \prof')}{i}\\
 & = \max_{k \in [m_i]}\left \{ \bigl ( \pays{\prof + \epsilon (\prof' - \prof)}{i} \bigr )_k \right \}  \hspace*{19mm} \text{By \eqref{eq:brp}}\\
& = \max_{k \in S} \left \{ \bigl ( \pays{\prof + \epsilon (\prof' -
\prof)}{i} \bigr)_k \right\} + \Lambda_i(\prof, \prof', \eps) \qquad \text{By \eqref{eq:split_max}
and \eqref{eq:lambda}}\\ 
& = \max_{k \in S} \left \{ \bigl (  (1 - \epsilon) \cdot \pays{\prof}{i} +  \epsilon \cdot \pays{\prof'}{i} \bigr )_k  \right \} + \Lambda_i(\prof, \prof', \eps).
\end{align*}
Since $S = \brs$, we know that for all $k \in S$ we have that
$(\pays{\prof}{i})_k$ are equal, so we have the following:
\begin{align*}
\max_{k \in S} \left \{ \bigl (  (1 - \epsilon) \cdot \pays{\prof}{i} +  \epsilon \cdot \pays{\prof'}{i} \bigr )_k  \right \} & =
 \max_{k \in S} \left \{ \bigl (  (1 - \epsilon) \cdot \pays{\prof}{i} \bigr)_k \right \} + 
\max_{k \in S} \left \{ \bigl ( \epsilon \cdot \pays{\prof'}{i} \bigr )_k  \right \} \\
 & = (1-\eps) \cdot \maxed{\prof}{S}{i} + \eps \cdot \maxed{\prof'}{S}{i}
\end{align*}
and we get the claimed result.
\qed
\end{proof}

We will use the expression~\eqref{eq:reform-u-star} for $\brp{\bar{\prof}}{i}$,
along with the following reformulation of $\pay{\bar{\prof}}{i}$:
\begin{align}
\nonumber
\pay{\bar{\prof}}{i} & = \pay{\prof + \epsilon(\prof' - \prof)}{i}\\
\nonumber
 & = \payt{x_i + \epsilon(x'_i - x_i)}{\prof + \epsilon(\prof' - \prof)}{i}\\
\nonumber
 & = \payt{x_i}{\prof}{i} + \eps\cdot\payt{x_i}{\prof' - \prof}{i} +
 	\eps\cdot\payt{x'_i - x_i}{\prof}{i} +  \eps^2\cdot\payt{x'_i - x_i}{\prof' - \prof}{i} \\
\nonumber
 & = \pay{\prof}{i} + \eps\cdot\payt{x_i}{\prof'}{i} -
 	\eps\cdot\payt{x_i}{\prof}{i} + \eps\cdot\payt{x'_i}{\prof}{i} + \eps\cdot\payt{x_i}{\prof}{i} 
 	-  \eps^2\cdot\pay{\prof' - \prof}{i} \\
\label{eqn:fromA}
 & = (1-\eps) \cdot \pay{\prof}{i} + \epsilon \bigl ( \payt{x_i}{\prof'}{i} + \payt{x'_i}{\prof}{i} - \pay{\prof}{i} \bigr ) + 
		\eps^2 \cdot \pay{\prof'-\prof}{i}.
\end{align}

%
We now use these reformulations to prove the following lemma.

\begin{lemma}
\label{lem:help-2}
We have that $f_i(\bar{\prof}) - f(\prof)$ is equal to:
\begin{equation*}
\epsilon \big( Df_i(\prof,\prof') - f(\prof) \big) + \Lambda_i(\prof, \prof', \eps) - \epsilon^2 \pay{\prof' - \prof}{i}
 - (1-\epsilon) \max_{j \in [n]} \bigl\{ f_j(\prof) - f_i(\prof) \bigr\}.
\end{equation*}
\end{lemma}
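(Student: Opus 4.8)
```
The plan is to directly compute $f_i(\bar{\prof}) = \brp{\bar{\prof}}{i} - \pay{\bar{\prof}}{i}$ using the two reformulations already established, and then subtract $f(\prof)$ and massage the result into the claimed form. Recall that $\bar{\prof} = (1-\eps)\prof + \eps\prof'$ and $S = \brs$.

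First I would substitute Lemma~\ref{om1-label} for the best-response-payoff term and Equation~\eqref{eqn:fromA} for the payoff term, giving
\begin{align*}
f_i(\bar{\prof}) ={}& (1-\eps)\maxed{\prof}{S}{i} + \eps\maxed{\prof'}{S}{i} + \Lambda_i(\prof,\prof',\eps) \\
&- (1-\eps)\pay{\prof}{i} - \eps\bigl(\payt{x_i}{\prof'}{i} + \payt{x'_i}{\prof}{i} - \pay{\prof}{i}\bigr) - \eps^2\pay{\prof'-\prof}{i}.
\end{align*}
Next, since $S = \brs$, the term $\maxed{\prof}{S}{i}$ is exactly $\brp{\prof}{i}$, so $\maxed{\prof}{S}{i} - \pay{\prof}{i} = f_i(\prof)$; this lets me group $(1-\eps)\maxed{\prof}{S}{i} - (1-\eps)\pay{\prof}{i} = (1-\eps)f_i(\prof)$. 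Then I would look at the $\eps$-coefficient: I want to produce $\eps\bigl(Df_i(\prof,\prof') - f(\prof)\bigr)$, and by Definition~\eqref{eq:Df_i}, $Df_i(\prof,\prof') = \maxed{\prof'}{S}{i} - \payt{x_i}{\prof'}{i} + \payt{x_i-x'_i}{\prof}{i}$, i.e.\ $Df_i = \maxed{\prof'}{S}{i} - \payt{x_i}{\prof'}{i} - \payt{x'_i}{\prof}{i} + \pay{\prof}{i}$, which matches precisely the bracket multiplying $\eps$ above once one writes $\maxed{\prof'}{S}{i} - \payt{x_i}{\prof'}{i} - \payt{x'_i}{\prof}{i} + \pay{\prof}{i}$. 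So the $\eps$ terms assemble into $\eps\, Df_i(\prof,\prof')$, and I then need to subtract $\eps f(\prof)$ to get the advertised $\eps(Df_i - f(\prof))$, which I pay for by rewriting the leftover $(1-\eps)f_i(\prof)$ as $f(\prof) - \eps f(\prof) - (1-\eps)(f(\prof) - f_i(\prof))$.

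Finally, subtracting $f(\prof)$ from both sides cancels the standalone $f(\prof)$, leaving exactly
\begin{equation*}
\eps\bigl(Df_i(\prof,\prof') - f(\prof)\bigr) + \Lambda_i(\prof,\prof',\eps) - \eps^2\pay{\prof'-\prof}{i} - (1-\eps)\bigl(f(\prof) - f_i(\prof)\bigr),
\end{equation*}
and one observes that $f(\prof) - f_i(\prof) = \max_{j\in[n]} f_j(\prof) - f_i(\prof) = \max_{j\in[n]}\{f_j(\prof) - f_i(\prof)\}$ by definition of $f$ in~\eqref{eq:f}, which is the stated form. The only mildly delicate point — and the step I expect to require the most care — is the bookkeeping in matching the $\eps$-coefficient bracket against the definition of $Df_i$: one must be careful that $\payt{x_i-x'_i}{\prof}{i} = \pay{\prof}{i} - \payt{x'_i}{\prof}{i}$ (using the linear-extension convention for the payoff function introduced in the preliminaries) so that signs line up, and that the reshuffling $(1-\eps)f_i(\prof) = f(\prof) - \eps f(\prof) - (1-\eps)(f(\prof)-f_i(\prof))$ is done correctly. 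Everything else is routine algebraic rearrangement.
```
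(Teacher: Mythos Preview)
Your proposal is correct and follows essentially the same route as the paper: substitute Lemma~\ref{om1-label} and Equation~\eqref{eqn:fromA} into $f_i(\bar{\prof})$, recognize the $(1-\eps)$-part as $(1-\eps)f_i(\prof)$ and the $\eps$-bracket as $Df_i(\prof,\prof')$, then split off $-\eps f(\prof)$ and $-(1-\eps)(f(\prof)-f_i(\prof))$ and rewrite the latter via $f(\prof)=\max_j f_j(\prof)$. The only cosmetic difference is that the paper subtracts $f(\prof)$ at the outset rather than at the end.
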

\begin{proof}
Recall that $S = \brs$.
For a given $i \in [n]$, using Lemma \ref{om1-label} and the reformulation for $u_i(\bar{\prof})$, we have:
\begin{align*}
f_i(\bar{\prof}) - f(\prof) & =  \brp{\bar{\prof}}{i} - \pay{\bar{\prof}}{i} - f(\prof)\\
 & = (1 - \epsilon) \cdot \maxed{\prof}{S}{i} 
+
\epsilon \cdot \maxed{\prof'}{S}{i}
+ \Lambda_i(\prof, \prof', \eps) \\
& - (1-\eps) \pay{\prof}{i} + \epsilon \bigl (-\payt{x_i}{\prof'}{i} - \payt{x'_i}{\prof}{i} + \pay{\prof}{i} \bigr ) - 
\eps^2 \pay{\prof'-\prof}{i}
- f(\prof).
\end{align*}
Recall from \eqref{eq:regret} that $f_i(\prof) = \maxed{\prof}{S}{i} - \pay{\prof}{i}$, 
so the formula above is equal to:
$$\epsilon \bigl ( \maxed{\prof'}{S}{i} - \payt{x_i}{\prof'}{i} - \payt{x'_i}{\prof}{i} + \pay{\prof}{i}  \bigr ) + \Lambda_i(\prof, \prof', \eps) - \epsilon^2 \pay{\prof'-\prof}{i}
+  (1-\epsilon)f_i(\prof) - f(\prof).$$
Using now \eqref{eq:Df_i} for $Df_i(\prof, \prof')$, the above formula becomes:
\begin{align}
\nonumber
 & \epsilon \cdot Df_i(\prof,\prof') + \Lambda_i(\prof, \prof', \eps) - \epsilon^2 \pay{\prof' - \prof}{i}
+  (1-\epsilon)f_i(\prof) - f(\prof) = \\
\nonumber
 & \epsilon \cdot Df_i(\prof,\prof') + \Lambda_i(\prof, \prof', \eps) - \epsilon^2 \pay{\prof' - \prof}{i}
+  (1-\epsilon)f_i(\prof) - (1-\eps)f(\prof) -\eps f(\prof) = \\
\nonumber
 & \epsilon \big( Df_i(\prof,\prof') - f(\prof) \big) + \Lambda_i(\prof, \prof', \eps) - \epsilon^2 \pay{\prof' - \prof}{i} - (1-\epsilon) \big( f(\prof) - f_i(\prof) \big).
\end{align}
Recall now that $f(\prof) = \max_{j \in [n]} f_j(\prof)$. Thus the term
$f(\prof) - f_i(\prof)$ can be written as 
$\max_{j \in [n]} \bigl\{ f_j(\prof) - f_i(\prof) \bigr\}$. So, the expression above is equivalent to
\begin{equation*}
\epsilon \big( Df_i(\prof,\prof') - f(\prof) \big) + \Lambda_i(\prof, \prof', \eps) - \epsilon^2 \pay{\prof' - \prof}{i} 
- (1-\epsilon) \max_{j \in [n]} \bigl\{ f_j(\prof) - f_i(\prof) \bigr\}.
\end{equation*}
\qed
\end{proof}
Now we are ready to prove Lemma~\ref{gradient-lem}.
Recall from definition \ref{def:gradient} for the gradient that 
\begin{align}
\nonumber
Df(\prof,\prof') & = \lim_{\eps \rightarrow 0}\frac{1}{\eps} \bigl( f((1 - \eps) \cdot \prof + \epsilon \cdot \prof') - f(\prof) \bigr) \\
\nonumber
 & = \lim_{\eps \rightarrow 0}\frac{1}{\eps} \bigl( f(\bar{\prof}) - f(\prof) \bigr) \\
\nonumber
 & = \lim_{\eps \rightarrow 0}\frac{1}{\eps} \left( \max_{i \in [n]}f_i(\bar{\prof}) - f(\prof) \right)\\
\nonumber
& = \lim_{\eps \rightarrow 0}\frac{1}{\eps} \left( \max_{i \in [n]} \bigl( f_i(\bar{\prof}) - f(\prof) \bigr) \right) \\
\label{eq:help-3}
& =  \max_{i \in [n]} \left( \lim_{\eps \rightarrow 0}\frac{1}{\eps}  \bigl( f_i(\bar{\prof}) - f(\prof) \bigr) \right).
\end{align}
We will now use lemma \ref{lem:help-2} to study the limit 
$\lim_{\eps \rightarrow 0} ( f_i(\bar{\prof}) - f(\prof) \bigr)$ for all $i \in [n]$.
Firstly, we deal with $\Lambda(\prof, \prof', \eps)$.
It is easy to see that $\lim_{\eps \rightarrow 0} \bigl( \prof + \eps(\prof' - \prof) \bigr) = \prof$.
Then, when $S = \brs$ we have that 
$$\lim_{\eps \rightarrow 0} \left( \max_{k \in \bar{S}}\{(\pays{\bar{\prof}}{i})_k\} - \max_{l\in S}\{ (\pays{\bar{\prof}}{i})_l\} \right) < 0.$$
This is true from the definition of pure best response strategies.
So, from equation \eqref{eq:lambda} for $\Lambda_i(\prof, \prof', \eps)$ it is true that 
$\lim_{\epsilon \rightarrow 0}\Lambda_i(\prof, \prof', \eps) = 0$. 

Furthermore, the term $\epsilon^2 \cdot \pay{\prof' - \prof}{i}$ 
when is divided by $\eps$ equals to $\epsilon \cdot \pay{\prof' - \prof}{i}$, thus $\lim_{\eps \rightarrow 0} \bigl(\epsilon \cdot \pay{\prof' - \prof}{i} \bigr) = 0$.

Moreover, the term
$$
\lim_{\epsilon \rightarrow 0} \left(-\frac{1-\eps}{\eps} \cdot \max_{j \in [n]} \bigl\{ f_j(\prof) - f_i(\prof) \bigr\}\right)
$$ 
is either 0 when $f_i(\prof) = f(\prof)$, i.e player $i$ has the maximum regret and $\max_{j \in [n]} \bigl\{ f_j(\prof) - f_i(\prof) \bigr\} = 0$, or $-\infty$ otherwise, because $\max_{j \in [n]} \bigl\{ f_j(\prof) - f_i(\prof) \bigr\} > 0$.

To sum up, if $f_i(\prof)$ achieves the maximum regret at point $\prof'$, then the limit $\lim_{\epsilon \rightarrow 0} \bigl( f_i(\bar{\prof}) - f(\prof)\bigr) = Df_i(\prof, \prof') - f(\prof)$, otherwise the limit equals $-\infty$. 

From \eqref{eq:help-3} for the gradient we want the maximum of these quantities, thus we have the claimed result.

\section{Proof of Lemma~\ref{lem:gain-bound}}
\label{app:gain-bound}

%
Throughout this proof, $\prof, \prof', \bar{\prof}$, and $\epsilon$ will be
fixed as they are defined in Section~\ref{sec:converge}. In order to prove this
lemma, we must show a bound on:
\begin{equation*}
f(\bar{\prof}) - f(\prof) = \max_{i \in [n]}f_i(\bar{\prof}) - f(\prof).
\end{equation*}

Before we start the analysis we need to redefine the term
$\Lambda^{\delta}_i(\prof, \prof', \eps)$ in order to prove an analogous version
of Lemma \ref{om1-label} when $\delta$-best responses are used.
\begin{definition}
We define $\Lambda^{\delta}_i(\prof, \prof', \eps)$ as:
\begin{equation}
\label{eq:delta-lambda}
\Lambda^{\delta}_i(\prof, \prof', \eps) 
 :=  \max \left\{0, \max_{k \in \overline{\brd}}\{(\pays{\bar{\prof}}{i})_k\} - \max_{l\in \brd}\{ (\pays{\bar{\prof}}{i})_l\} \right\}.
\end{equation}
\end{definition}
We now use this definition to prove the following lemma.

\begin{lemma}
\label{lem:equival}
We have:
\begin{align}
\label{eq:lambda-u-star}
\brp{(1 - \eps) \cdot \prof + \epsilon \cdot \prof')}{i}
\leq (1 - \eps)\max_{k \in \brd} \bigl(\pays{\prof}{i})_k + \eps \max_{k \in \brd} (\pays{\prof'}{i}\bigr)_k  + \Lambda^{\delta}_i(\prof, \prof', \eps).
\end{align}
\end{lemma}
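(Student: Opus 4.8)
The plan is to follow the proof of Lemma~\ref{om1-label} essentially verbatim, but with the set $S = \brs$ replaced by $S = \brd$, and with the \emph{exact} splitting of the maximum over $S$ replaced by the sub-additivity bound $\max_k(b_k+c_k) \le \max_k b_k + \max_k c_k$. The inequality (rather than the equality of Lemma~\ref{om1-label}) appears precisely because the entries $(\pays{\prof}{i})_k$ for $k$ in a $\delta$-best response set need not be equal — they are only guaranteed to be within $\delta$ of one another — so the maximum of a sum no longer factors cleanly.

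Concretely, write $\bar{\prof} := (1-\eps)\cdot\prof + \eps\cdot\prof'$ and put $S = \brd$. Applying identity~\eqref{eq:split_max} to the multiset $\{(\pays{\bar{\prof}}{i})_k : k \in [m_i]\}$ with this $S$, together with~\eqref{eq:brp} and the definition~\eqref{eq:delta-lambda} of $\Lambda^{\delta}_i$, gives
\[
\brp{\bar{\prof}}{i} \;=\; \max_{k \in \brd}\bigl\{ (\pays{\bar{\prof}}{i})_k \bigr\} + \Lambda^{\delta}_i(\prof, \prof', \eps).
\]
By linearity of $v_i(\cdot)$ we have $\pays{\bar{\prof}}{i} = (1-\eps)\cdot\pays{\prof}{i} + \eps\cdot\pays{\prof'}{i}$, so the first term on the right is $\max_{k \in \brd}\{ (1-\eps)(\pays{\prof}{i})_k + \eps(\pays{\prof'}{i})_k \}$. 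Since $\eps \in [0,1]$, both coefficients $(1-\eps)$ and $\eps$ are nonnegative, and the sub-additivity of $\max$ bounds this quantity by $(1-\eps)\max_{k \in \brd}\{(\pays{\prof}{i})_k\} + \eps\max_{k \in \brd}\{(\pays{\prof'}{i})_k\}$. Combining the two displays yields exactly the claimed inequality~\eqref{eq:lambda-u-star}.

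I do not expect a real obstacle here; the only thing to be careful about is bookkeeping, namely that $\Lambda^{\delta}_i$ in~\eqref{eq:delta-lambda} is built from exactly the $\brd$ / $\overline{\brd}$ split that feeds~\eqref{eq:split_max}, which it is. One could also phrase the whole argument in one line: $\brp{\bar\prof}{i}$ is the max over all pure strategies, which by~\eqref{eq:split_max} equals the max over $\brd$ plus the nonnegative slack $\Lambda^{\delta}_i$, and the max over $\brd$ of the $\eps$-convex combination of $\pays{\prof}{i}$ and $\pays{\prof'}{i}$ is at most the $\eps$-convex combination of the two separate maxima over $\brd$.
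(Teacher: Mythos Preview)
Your proposal is correct and follows essentially the same approach as the paper: both apply the splitting identity~\eqref{eq:split_max} with $S=\brd$ to express $\brp{\bar\prof}{i}$ as $\max_{k\in\brd}(\pays{\bar\prof}{i})_k + \Lambda^{\delta}_i$, then use linearity and sub-additivity of $\max$ to bound the first term by the $\eps$-convex combination of the two separate maxima. Your additional commentary on why only an inequality holds (the $\delta$-best-response entries of $\pays{\prof}{i}$ need not all be equal) is a nice clarification not spelled out in the paper.
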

\begin{proof}
We have:
\begin{align*}
\brp{(1 - \eps) \cdot \prof + \epsilon \cdot \prof')}{i} & =
\max_{k \in [m_i]} \bigl(\pays{(1 - \eps) \cdot \prof + \epsilon \cdot \prof'}{i}\bigr)_k \\
& = \max_{k \in \brd} \bigl(\pays{(1 - \eps) \cdot \prof + \epsilon \cdot \prof'}{i}\bigr)_k  + \Lambda^{\delta}_i(\prof, \prof', \eps) \quad \text{Using \eqref{eq:split_max}}\\ 
& \leq (1 - \eps)\max_{k \in \brd} \bigl(\pays{\prof}{i}\bigr)_k + \eps \max_{k \in \brd} \bigl(\pays{\prof'}{i}\bigr)_k  + \Lambda^{\delta}_i(\prof, \prof', \eps).
\end{align*}
\qed
\end{proof}
We will use the reformulation from Equation~\eqref{eqn:fromA} for
$\pay{\bar{\prof}}{i}$: 
\begin{align}
\label{eq:reform2}
\pay{\bar{\prof}}{i} = (1-\eps) \cdot \pay{\prof}{i} + \epsilon \bigl ( \payt{x_i}{\prof'}{i} + \payt{x'_i}{\prof}{i} - \pay{\prof}{i} \bigr ) + 
		\eps^2 \cdot \pay{\prof'-\prof}{i}.
\end{align}
The correctness of this was proved in Appendix \ref{app:gradient-lem}.
Now we use all the these reformulations in order to prove the following lemma.

\begin{lemma}
\label{lem:dgain}
We have that $f_i(\bar{\prof}) - f(\prof)$ is less than or equal to:
\begin{equation}
\label{eq:dgain1}
\epsilon \bigl( Df^{\delta}_i(\prof,\prof') - f(\prof) \bigr) + 
\Lambda^{\delta}_i(\prof, \prof', \eps) - \epsilon^2 \pay{\prof' - \prof}{i} - (1-\epsilon) \max_{j \in [n]} \left\{ f_j - f_i\right\}.
\end{equation}
\end{lemma}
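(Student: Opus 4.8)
The plan is to mirror the proof of Lemma~\ref{lem:help-2} almost line for line, replacing the exact decomposition of the best-response payoff (Lemma~\ref{om1-label}) by its one-sided $\delta$-version (Lemma~\ref{lem:equival}), so that the equalities become the desired inequality. Concretely, I would start from $f_i(\bar{\prof}) = \brp{\bar{\prof}}{i} - \pay{\bar{\prof}}{i}$, upper bound $\brp{\bar{\prof}}{i}$ using \eqref{eq:lambda-u-star}, and substitute the exact reformulation \eqref{eq:reform2} for $\pay{\bar{\prof}}{i}$. Since we subtract an exact quantity from an upper bound, the result is an upper bound on $f_i(\bar{\prof}) - f(\prof)$, which is the direction we want.

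After this substitution the expression contains the block $(1-\eps)\bigl(\max_{k\in\brd}(\pays{\prof}{i})_k - \pay{\prof}{i}\bigr)$. Here I would use that the $\delta$-best-response set always contains a genuine best response (because $\delta>0$, every pure strategy in $\suppmax(\pays{\prof}{i})$ lies in $\brd$), so $\max_{k\in\brd}(\pays{\prof}{i})_k = \brp{\prof}{i}$ and this block equals $(1-\eps)f_i(\prof)$. The remaining $\eps$-order terms then combine, via the definition \eqref{eq:delta-dfi} and $\pay{\prof}{i}=\payt{x_i}{\prof}{i}$, into exactly $\eps\cdot Df^{\delta}_i(\prof,\prof')$, leaving also the stray terms $\Lambda^{\delta}_i(\prof,\prof',\eps)$ and $-\eps^2\pay{\prof'-\prof}{i}$ untouched.

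Finally I would repeat the bookkeeping step from Lemma~\ref{lem:help-2}: write $f(\prof)=(1-\eps)f(\prof)+\eps f(\prof)$ to turn $\eps\cdot Df^{\delta}_i(\prof,\prof') + (1-\eps)f_i(\prof) - f(\prof)$ into $\eps\bigl(Df^{\delta}_i(\prof,\prof')-f(\prof)\bigr) - (1-\eps)\bigl(f(\prof)-f_i(\prof)\bigr)$, and then rewrite $f(\prof)-f_i(\prof) = \max_{j\in[n]}\{f_j(\prof)-f_i(\prof)\}$ using $f(\prof)=\max_{j}f_j(\prof)$, which is the shorthand appearing in \eqref{eq:dgain1}. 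Collecting everything then yields precisely the claimed bound.

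The only place that needs genuine care — and the main obstacle — is the direction of the inequalities: I must make sure that the step which was an equality in Lemma~\ref{om1-label} is replaced by Lemma~\ref{lem:equival} in the correct orientation (an upper bound on $\brp{\bar{\prof}}{i}$, hence on $f_i(\bar{\prof})$), and that replacing $\brs$ by $\brd$ in the payoff-maximum over the current profile does not change $\max_{k\in\brd}(\pays{\prof}{i})_k$ from $\brp{\prof}{i}$; both hold because $\brd\supseteq\brs$ for $\delta\geq 0$. Everything else is the same symbol-pushing as in Appendix~\ref{app:gradient-lem}.
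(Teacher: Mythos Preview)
Your proposal is correct and follows essentially the same argument as the paper's proof: apply Lemma~\ref{lem:equival} to upper-bound $\brp{\bar{\prof}}{i}$, substitute the exact expansion \eqref{eq:reform2} for $\pay{\bar{\prof}}{i}$, use $\max_{k\in\brd}(\pays{\prof}{i})_k=\brp{\prof}{i}$ (since $\brd\supseteq\brs$) to recover $(1-\eps)f_i(\prof)$, identify the $\eps$-terms with $Df^{\delta}_i$, and finish with the same $f(\prof)=(1-\eps)f(\prof)+\eps f(\prof)$ bookkeeping. Your explicit justification that $\brd\supseteq\brs$ is in fact a small clarification the paper leaves implicit.
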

\begin{proof}
Recall that, by definition, we have that:
\begin{equation*}
f_i(\bar{\prof}) = \brp{\bar{\prof}}{i} - \pay{\bar{\prof}}{i}.
\end{equation*}
Thus, we can apply Lemma~\ref{lem:equival} along with the reformulation given in
Equation~\eqref{eq:reform2} for \pay{\bar{\prof}}{i} to prove that
$f_i(\bar{\prof}) - f(\prof)$ is less than or equal to:
\begin{align*}
(1 - \eps) & \max_{k \in \brd} \bigl(\pays{\prof}{i})_k + \eps \max_{k \in \brd} (\pays{\prof'}{i}\bigr)_k  + \Lambda^{\delta}_i(\prof, \prof', \eps) \\
 & - (1-\eps) \pay{\prof}{i} + \epsilon \bigl (-\payt{x_i}{\prof'}{i} - \payt{x'_i}{\prof}{i} + \pay{\prof}{i} \bigr ) - 
\eps^2 \pay{\prof'-\prof}{i}
- f(\prof).
\end{align*}
We can now use the fact that $\max_{k \in
\brd} \bigl(\pays{\prof}{i}\bigr)_k - \pay{\prof}{i} = f_i(\prof)$ and
the definition of 
$Df^{\delta}_i(\prof, \prof')$ given in \eqref{eq:delta-dfi}
to prove that the expression above is equivalent to:
\begin{align}
\nonumber
& \epsilon \cdot Df^{\delta}_i(\prof,\prof') + \Lambda^{\delta}_i(\prof, \prof', \eps) - \epsilon^2 \pay{\prof' - \prof}{i}
+  (1-\epsilon)f_i(\prof) - f(\prof) \\
\nonumber
 &= \epsilon \cdot Df^{\delta}_i(\prof,\prof') + \Lambda^{\delta}_i(\prof, \prof', \eps) - \epsilon^2 \pay{\prof' - \prof}{i}
+  (1-\epsilon)f_i(\prof) - (1-\eps)f(\prof) -\eps f(\prof)  \\
\nonumber
 &= \epsilon \big( Df^{\delta}_i(\prof,\prof') - f(\prof) \big) + \Lambda^{\delta}_i(\prof, \prof', \eps) - \epsilon^2 \pay{\prof' - \prof}{i} - (1-\epsilon) \big( f(\prof) - f_i(\prof) \big)  \\
\nonumber
&= \epsilon \big( Df^{\delta}_i(\prof,\prof') - f(\prof) \big) + \Lambda^{\delta}_i(\prof, \prof', \eps) - \epsilon^2 \pay{\prof' - \prof}{i} - 
(1-\epsilon) \max_{j \in [n]} \bigl\{ f_j(\prof) - f_i(\prof) \bigr\}.
\end{align}
This completes the proof.
\qed
\end{proof}

Having shown Lemma~\ref{lem:dgain}, we will now study each term of
\eqref{eq:dgain1} and provide bounds for each of them.
To begin with, it is easy to see that for all $i \in [n]$ we have that 
$\max_{j \in [n]} \bigl\{ f_j(\prof) - f_i(\prof) \bigr\} \geq 0$, and since
$\eps < 1$, we have that 
$(1-\epsilon) \max_{j \in [n]} \bigl\{ f_j(\prof) - f_i(\prof) \bigr\} \geq
0$. Thus, Equation \eqref{eq:dgain1} is less than or equal to:
\begin{equation}
\label{eq:dgain2}
\epsilon \big( Df^{\delta}_i(\prof,\prof') - f(\prof) \big) + \Lambda^{\delta}_i(\prof, \prof', \eps) - \epsilon^2 \pay{\prof' - \prof}{i}.
\end{equation} 

Next we consider the term $\Lambda^{\delta}_i(\prof, \prof', \eps)$. In the
following technical lemma we prove that $\Lambda^{\delta}_i(\prof, \prof', \eps)
= 0$ for all $i \in [n]$. 
%

\begin{lemma}
\label{lem:d-lambda}
We have
$\Lambda^{\delta}_i(\prof, \prof', \eps) = 0$ for all $i \in [n]$.
\end{lemma}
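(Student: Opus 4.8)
The plan is to show that the term $\max_{k \in \overline{\brd}} \{ (\pays{\bar{\prof}}{i})_k \} - \max_{l \in \brd} \{ (\pays{\bar{\prof}}{i})_l \}$ is strictly negative, so that the outer $\max\{0, \cdot\}$ in the definition~\eqref{eq:delta-lambda} of $\Lambda^{\delta}_i$ evaluates to $0$. Recall that $\bar{\prof} = \prof + \epsilon(\prof' - \prof) = (1-\epsilon)\prof + \epsilon \prof'$, and that the relevant $\delta$-best response set is $\brd$, computed with respect to $\prof$ (not $\bar{\prof}$). The key quantitative facts I would use are: (i) the payoff vector $\pays{\cdot}{i}$ is an affine (in fact linear in the profile) function, so $\pays{\bar{\prof}}{i} = (1-\epsilon)\pays{\prof}{i} + \epsilon\pays{\prof'}{i}$; (ii) the entries of $\pays{\cdot}{i}$ all lie in $[0,1]$ (after the payoff normalization of Section~\ref{sec:prelim}), so any difference between two such entries is at most $1$ in absolute value; and (iii) by Definition~\ref{def:dbr}, every coordinate $k \notin \brd$ satisfies $(\pays{\prof}{i})_k < \brp{\prof}{i} - \delta$, while some coordinate in $\brd$ attains $\brp{\prof}{i}$.

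The key steps, in order, are as follows. First, fix $i \in [n]$ and pick $k^* \in \overline{\brd}$ achieving $\max_{k \in \overline{\brd}}\{(\pays{\bar{\prof}}{i})_k\}$, and pick $l^* \in \brd$ with $(\pays{\prof}{i})_{l^*} = \brp{\prof}{i}$ (the pure best response payoff against $\prof$). Second, using linearity,
\[
(\pays{\bar{\prof}}{i})_{k^*} - (\pays{\bar{\prof}}{i})_{l^*}
= (1-\epsilon)\bigl((\pays{\prof}{i})_{k^*} - (\pays{\prof}{i})_{l^*}\bigr) + \epsilon\bigl((\pays{\prof'}{i})_{k^*} - (\pays{\prof'}{i})_{l^*}\bigr).
\]
For the first bracket, $k^* \notin \brd$ forces $(\pays{\prof}{i})_{k^*} < \brp{\prof}{i} - \delta = (\pays{\prof}{i})_{l^*} - \delta$, so the first bracket is $< -\delta$. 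For the second bracket, both entries lie in $[0,1]$, so it is $\le 1$. Hence the whole expression is $< -(1-\epsilon)\delta + \epsilon = \epsilon(1+\delta) - \delta$. Third, substitute $\epsilon = \frac{\delta}{\delta+2}$: then $\epsilon(1+\delta) - \delta = \frac{\delta(1+\delta)}{\delta+2} - \delta = \frac{\delta(1+\delta) - \delta(\delta+2)}{\delta+2} = \frac{-\delta}{\delta+2} < 0$. Since $\max_{l \in \brd}\{(\pays{\bar{\prof}}{i})_l\} \ge (\pays{\bar{\prof}}{i})_{l^*}$, this gives $\max_{k \in \overline{\brd}}\{(\pays{\bar{\prof}}{i})_k\} - \max_{l \in \brd}\{(\pays{\bar{\prof}}{i})_l\} < 0$, and therefore $\Lambda^{\delta}_i(\prof, \prof', \eps) = 0$. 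Since $i$ was arbitrary, this holds for all $i \in [n]$.

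The one point that needs care — and is the only real obstacle — is the bound on the ``bad'' second bracket, $(\pays{\prof'}{i})_{k^*} - (\pays{\prof'}{i})_{l^*}$. One must make sure the $[0,1]$ range applies to the entries of $\pays{\prof'}{i}$: each such entry is the payoff player $i$ would get from some pure strategy against the profile $\prof'$, and the normalization of the payoff matrices guarantees these lie in $[0,1]$, so the difference is indeed at most $1$. It is precisely the choice $\epsilon = \frac{\delta}{\delta+2}$ that makes the loss of $\epsilon$ from this term strictly smaller than the gain of $(1-\epsilon)\delta$ from the first term; any $\epsilon < \frac{\delta}{1+\delta}$ would in fact suffice, and $\frac{\delta}{\delta+2} < \frac{\delta}{\delta+1}$, so the inequality is comfortably satisfied. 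Everything else is routine substitution, so I would keep the write-up short.
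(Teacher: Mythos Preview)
Your proof is correct and follows essentially the same approach as the paper: both arguments exploit the linearity of $\pays{\cdot}{i}$, the strict $\delta$-gap for indices outside $\brd$, and the $[0,1]$ payoff bounds to show the inner difference in \eqref{eq:delta-lambda} is negative for the given $\epsilon$. Your bookkeeping is slightly tighter than the paper's (you compare against a single best-response index $l^*$ rather than lower-bounding the max, yielding the sufficient condition $\epsilon < \frac{\delta}{1+\delta}$ instead of the paper's $\epsilon \le \frac{\delta}{2}$), but this is a cosmetic sharpening rather than a different route.
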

\begin{proof}
%
According to equation \eqref{eq:delta-lambda} for $\Lambda^{\delta}_i(\prof,
\prof', \eps)$, we have: 
\begin{align*}
\Lambda^{\delta}_i(\prof, \prof', \eps) 
& =  \max \left\{0, \max_{k \in \overline{\brd}}\{(\pays{\bar{\prof}}{i})_k\} - \max_{l\in \brd}\{ (\pays{\bar{\prof}}{i})_l\} \right\}.
\end{align*}
We can rewrite this expression as follows. First define:
\begin{equation*}
Z(\prof, \prof', \epsilon, k) = 
(\pays{\bar{\prof}}{i})_k - \max_{l\in \brd}\{ (\pays{\bar{\prof}}{i})_l\}.
\end{equation*}
Then we have:
\begin{equation*}
\label{eq:dlzero}
\Lambda^{\delta}_i(\prof, \prof', \eps) = \max\bigg\{0, \max_{k \in \overline{\brd}}
\Bigl\{Z(\prof, \prof', \epsilon, k)\Bigr\}\bigg\}.
\end{equation*}
Our goal is to show that, for our chosen value of $\epsilon$, we have 
$\Lambda^{\delta}_i(\prof, \prof', \eps) = 0$. For this to be the case, we
must have that $Z(\prof, \prof', \epsilon, k) \le 0$ for all $k \in
\overline{\brd}$. In the rest of this proof, we will show that this is indeed
the case.

By definition, we have that:
\begin{align}
\label{eq:new1}
(\pays{\bar{\prof}}{i})_k = 
 \bigl( \pays{\prof}{i} + \eps (\pays{\prof'}{i} - \pays{\prof}{i}) \bigr)_k.
\end{align}
The term $\max_{l\in \brd}\{ (\pays{\bar{\prof}}{i})_l\}$ can be written as
follows:
\begin{align}
\nonumber
  \max_{l\in \brd}  \{ (\pays{(1-\eps)\prof + \eps\prof'}{i})_l\} 
 & \geq \max_{l\in \brd}  \{ (\pays{(1-\eps)\prof}{i})_l\} \\
\nonumber
 & = (1-\eps) \cdot \max_{l\in \brd}  \{ (\pays{\prof}{i})_l\} \\
\label{eq:new2}
 & = \max_{l\in \brd}\{ (\pays{\prof }{i})_l\} -
\eps \cdot \max_{l\in \brd} \{ (\pays{\prof}{i})_l\}.
\end{align}
We now substitute these two bounds into the definition of $Z(\prof, \prof', \eps,
k)$.
We have:
\begin{multline}
\label{eq:zbound}
Z(\prof, \prof', \epsilon, k) \le
\pays{\prof}{i}_k - \max_{l\in \brd}\{ (\pays{\prof }{i})_l\}
+ 
\eps \bigg( 
\pays{\prof'}{i}_k - \pays{\prof}{i}_k 
+ \max_{l\in \brd} \{ (\pays{\prof}{i})_l\}
\bigg).
\end{multline}
From the definition of $\delta$-best responses (Definition
\ref{def:dbr}), we know that for all $k \in
\overline{\brd}$:
\begin{equation*}
\pays{\prof}{i}_k - \max_{l\in \brd}\{ (\pays{\prof }{i})_l\}
 < - \delta.
\end{equation*}
Furthermore, since we know that the maximum payoff for player $i \in [n]$ 
is 1, we have the following trivial bound for all $k \in \overline{\brd}$:
\begin{equation*}
\pays{\prof'}{i}_k - \pays{\prof}{i}_k 
+ \max_{l\in \brd} \{ (\pays{\prof}{i})_l\} \le 2.
\end{equation*}
Substituting these two bounds into Equation~\eqref{eq:zbound} gives, for all $k
\in \overline{\brd}$:
\begin{equation*}
Z(\prof, \prof', \epsilon, k) \le - \delta + \epsilon \cdot 2.
\end{equation*}
Thus, for each $k \in \overline{\brd}$, we have that $Z(\prof, \prof',
\epsilon, k) \le 0$ whenever:
\begin{equation*}
- \delta + \epsilon \cdot 2 \le 0,
\end{equation*}
and this is equivalent to:
\begin{equation*}
\eps \le \frac{\delta}{2}.
\end{equation*}
This inequality holds by the definition of $\epsilon$, so we have $Z(\prof,
\prof', \eps, k) \le 0$ for all $k \in \overline{\brd}$, which then implies that
$\Lambda^{\delta}_i(\prof, \prof', \eps) \le 0$. \qed
\end{proof}

Next we consider the term $\pay{\prof' - \prof}{i}$ in
Equation~\eqref{eq:dgain2}. The following lemma provides a simple lower bound
for this term.
\begin{lemma}
\label{lem:u2-bound}
For all $i \in [n]$, we have $Df^{\delta}_i(\prof, \prof') - 1 \leq
\pay{\prof' - \prof}{i}$.
\end{lemma}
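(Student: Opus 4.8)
The plan is to unwind the definition of $Df^{\delta}_i(\prof, \prof')$ from Equation~\eqref{eq:delta-dfi} and compare it term-by-term with $\pay{\prof' - \prof}{i} = \payt{x'_i}{\prof'}{i} - \payt{x_i}{\prof}{i}$. Recall that
\begin{equation*}
Df^{\delta}_i(\prof,\prof') = \max_{k \in \brd} \left \{ \bigl ( \pays{\prof'}{i} \bigr )_k \right \} - \payt{x_i}{\prof'}{i} - \payt{x'_i}{\prof}{i} + \payt{x_i}{\prof}{i}.
\end{equation*}
The key observation is that the quantity $\max_{k \in \brd} \{ ( \pays{\prof'}{i} )_k \}$ is a maximum of entries of $\pays{\prof'}{i}$, hence at most the overall best-response payoff $\brp{\prof'}{i}$, which in turn is at most $1$ since all payoffs have been rescaled so that player $i$'s maximum possible payoff is $1$. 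So the first term contributes at most $1$.

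Next I would bound the remaining three terms. Both $\payt{x'_i}{\prof}{i}$ and $-\payt{x'_i}{\prof}{i} + \payt{x_i}{\prof}{i}$ — wait, more directly: we want to show $Df^{\delta}_i(\prof, \prof') - \pay{\prof'-\prof}{i} \leq 1$. Substituting,
\begin{equation*}
Df^{\delta}_i(\prof, \prof') - \pay{\prof'-\prof}{i} = \max_{k \in \brd} \left \{ \bigl ( \pays{\prof'}{i} \bigr )_k \right \} - \payt{x_i}{\prof'}{i} - \payt{x'_i}{\prof}{i} + \payt{x_i}{\prof}{i} - \payt{x'_i}{\prof'}{i} + \payt{x_i}{\prof}{i}.
\end{equation*}
Hmm, this produces a $2\payt{x_i}{\prof}{i}$ term and is messier than expected. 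The cleaner route: bound $Df^{\delta}_i(\prof, \prof')$ from above directly. We have $\max_{k \in \brd}\{(\pays{\prof'}{i})_k\} \leq 1$, and $-\payt{x_i}{\prof'}{i} \leq 0$ is false in general since payoffs can be as large as $1$; instead use $-\payt{x_i}{\prof'}{i} + \payt{x_i}{\prof}{i} \leq 1$ crudely, but then $-\payt{x'_i}{\prof}{i} \leq 0$ fails too. So the honest approach is: group the terms as $\bigl(\max_{k \in \brd}\{(\pays{\prof'}{i})_k\} - \payt{x'_i}{\prof}{i}\bigr) + \bigl(\payt{x_i}{\prof}{i} - \payt{x_i}{\prof'}{i}\bigr)$ and compare against $\pay{\prof'-\prof}{i}$, or — most likely what the authors intend — observe that
\begin{equation*}
Df^{\delta}_i(\prof, \prof') - \pay{\prof' - \prof}{i} \le \max_{k \in \brd} \left \{ \bigl ( \pays{\prof'}{i} \bigr )_k \right \} - \payt{x'_i}{\prof'}{i} \le 1 - 0 = 1,
\end{equation*}
provided the cross terms $-\payt{x_i}{\prof'}{i} - \payt{x'_i}{\prof}{i} + \payt{x_i}{\prof}{i} + \payt{x_i}{\prof}{i} - \payt{x_i}{\prof}{i}$ cancel appropriately; I would carefully verify the exact cancellation using $\pay{\prof'-\prof}{i} = \payt{x'_i}{\prof'}{i} - \payt{x_i}{\prof}{i}$ and the bilinearity conventions for $\payt{\cdot}{\cdot}{i}$ fixed in the Preliminaries.

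The main obstacle I anticipate is precisely pinning down which of the multilinear payoff terms cancel: the notation $\payt{x_i - x'_i}{\prof}{i}$, $\pay{\prof'-\prof}{i}$ etc. each expands via the conventions set in Section~\ref{sec:prelim}, and a sign error there breaks the bound. Once the algebra is arranged so that the surviving difference is $\max_{k \in \brd}\{(\pays{\prof'}{i})_k\}$ minus a single payoff term that lies in $[0,1]$, the conclusion $Df^{\delta}_i(\prof, \prof') - \pay{\prof'-\prof}{i} \le 1 - 0 = 1$ is immediate from the rescaling, giving $Df^{\delta}_i(\prof, \prof') - 1 \le \pay{\prof'-\prof}{i}$ as required.
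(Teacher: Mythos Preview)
Your overall strategy is right --- compute $Df^{\delta}_i(\prof,\prof') - \pay{\prof'-\prof}{i}$, reduce it to a single ``max'' term minus a single payoff term, and bound that difference by $1-0=1$ using the rescaling. That is exactly what the paper does. The gap is in how you expand $\pay{\prof'-\prof}{i}$.

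You write $\pay{\prof'-\prof}{i} = \payt{x'_i}{\prof'}{i} - \payt{x_i}{\prof}{i}$, but this is not correct: by the definition $\pay{\prof}{i} = x_i^T\sum_{j\in\neighbours}A_{ij}x_j$, the function $u_i(\cdot)$ is \emph{bilinear} in (player~$i$'s strategy)~$\times$~(neighbours' strategies). Substituting the profile $\prof'-\prof$ replaces both slots, giving four terms:
\begin{equation*}
\pay{\prof'-\prof}{i} \;=\; \payt{x'_i-x_i}{\prof'-\prof}{i}
\;=\; \payt{x'_i}{\prof'}{i} - \payt{x'_i}{\prof}{i} - \payt{x_i}{\prof'}{i} + \payt{x_i}{\prof}{i}.
\end{equation*}
(This is the same expansion used in Equation~\eqref{eqn:fromA}.) With your two-term version the cancellation genuinely fails --- that is the source of the stray $2\,\payt{x_i}{\prof}{i}$ you noticed. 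With the correct four-term expansion, the last three terms match the last three terms of~\eqref{eq:delta-dfi} exactly, and you get the clean identity
\begin{equation*}
Df^{\delta}_i(\prof,\prof') - \pay{\prof'-\prof}{i}
\;=\; \max_{k\in\brd}\bigl\{(\pays{\prof'}{i})_k\bigr\} - \payt{x'_i}{\prof'}{i}
\;\le\; 1 - 0 \;=\; 1,
\end{equation*}
since both quantities on the right lie in $[0,1]$ by the payoff rescaling. This is precisely the ``surviving difference'' you were aiming for; once the expansion of $\pay{\prof'-\prof}{i}$ is fixed, your sketch becomes the paper's proof verbatim.
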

\begin{proof}
For $\pay{\prof' - \prof}{i}$ we have the following:
\begin{align}
\nonumber
\pay{\prof' - \prof}{i} & = \payt{x'_i - x_i}{\prof' - \prof}{i} \\
\nonumber
& = \payt{x'_i}{\prof' - \prof}{i} - \payt{x_i}{\prof' - \prof}{i} \\
\label{eq:u2-bound-1}
& = \payt{x'_i}{\prof'}{i} - \payt{x'_i}{\prof}{i} - \payt{x_i}{\prof'}{i} + \payt{x_i}{\prof}{i}.
\end{align}
Recall from \eqref{eq:delta-dfi} that
$$ Df^{\delta}_i(\prof, \prof') = \max_{k \in \brd} \left \{ \bigl ( \pays{\prof'}{i}
\bigr )_k \right \} - \payt{x_i}{\prof'}{i} - \payt{x'_i}{\prof}{i}+ \payt{x_i}{\prof}{i}. $$
We can see that \eqref{eq:u2-bound-1} and \eqref{eq:delta-dfi} differ only in terms
$\payt{x'_i}{\prof'}{i}$ and $\max_{k \in \brd} \left \{ \bigl ( \pays{\prof'}{i}\bigr )_k \right \}$
respectively. 
We know that $\max_{k \in \brd} \left \{ \bigl ( \pays{\prof'}{i}\bigr )_k \right \} \leq 1$.
Then, we can see that $Df^{\delta}_i(\prof, \prof') - 1 \leq \pay{\prof' - \prof}{i}$.
\qed
\end{proof}
Recall that $\mathcal{D} = \max_{i \in [n]} Df^{\delta}_i(\prof,\prof')$ and
$f_{new} = f(\bar{\prof})$ and $f = f(\prof)$.
We can now apply the bounds from Lemma~\ref{lem:d-lambda} and
Lemma~\ref{lem:u2-bound} to Equation~\eqref{eq:dgain2} to obtain:
\begin{align*}
\label{eq:decrease-long}
f_{new} - f & \leq \max_{i \in [n]} \left\{ 
\epsilon \bigl( Df^{\delta}_i(\prof,\prof') - f(\prof) \bigr) -
\eps^2 \bigl( Df^{\delta}_i(\prof, \prof') - 1 \bigr)
\right\}\\
& \leq \max_{i \in [n]} \left\{ 
\epsilon \bigl( Df^{\delta}_i(\prof,\prof') - f(\prof) \bigr) -
\eps^2 \bigl( \mathcal{D} - 1 \bigr)
\right\}\\
 & =  \eps (\mathcal{D} - f) + \eps^2(1 - \mathcal{D}).
\end{align*}
This completes the proof of Lemma~\ref{lem:gain-bound}.

\end{document}